\def\NAT@def@citea{\def\@citea{\NAT@separator}}
\theoremstyle{thmstyleone}%
\newtheorem{theorem}{Theorem}
\newtheorem{definition}{Definition}
\newtheorem{lemma}{Lemma}
\newtheorem{proposition}{Proposition}
\newtheorem{assumption}{Assumption}
\newtheorem{remark}{Remark}
\newtheorem{corollary}{Corollary}
\newtheorem{scenario}{Scenario}
\begin{document}

\title[Spoof-resilient Distributed Observers]{Cooperative Distributed State Estimation: Resilient Topologies against Smart Spoofers}


\author*[1]{\fnm{Mostafa} \sur{Safi}}\email{halebi@aut.ac.ir}

\affil*[1]{\orgdiv{} \orgname{Amirkabir University of Technology}, \orgaddress{\street{Hafez}, \city{Tehran}, \postcode{424}, \state{Tehran}, \country{Iran}}}

%


\abstract{A network of observers is considered, where through asynchronous (with bounded delay) communications, they cooperatively estimate the states of a Linear Time-Invariant (LTI) system. In such a setting, a new type of adversary might affect the observation process by impersonating the identity of the regular node, which is a violation of communication authenticity. These adversaries also inherit the capabilities of Byzantine nodes, making them more powerful threats called \textit{smart spoofers}. We show how asynchronous networks are vulnerable to smart spoofing attack. In the estimation scheme considered in this paper, information flows from the sets of source nodes, which can detect a portion of the state variables each, to the other follower nodes. The regular nodes, to avoid being misguided by the threats, distributively filter the extreme values received from the nodes in their neighborhood. Topological conditions based on strong robustness are proposed to guarantee the convergence. Two simulation scenarios are provided to verify the results.}

\keywords{Cyber-physical systems, smart spoofing, distributed resilient algorithm, secure observers}



\maketitle

\section{Introduction}\label{Sect: Intro}

Security is becoming an increasingly important concern for the stability and safety of networked control systems. Nowadays, in large-scale control systems,
communication channels connecting various physical components for real-time measurement and control mostly make use of general purpose cyber-networks such as the Internet and wireless networks, which create vulnerabilities to adversarial intrusions. 
While conventional network security-based measures may be partially effective, novel resiliency methods explicitly taking the dynamical nature of physical components into account should be developed as any failure in security of the cyber components in such systems may turn into irrecoverable harms to the physical infrastructure.

Security experts define various security goals including (i) \textit{Confidentiality}, ensuring privacy of important data against outside eavesdroppers; (ii) \textit{Integrity}, maintaining fidelity of system signals; (iii) \textit{Availability}, capability of timely having access to the required signals; (iv) \textit{Authenticity}, verifying identity of each signal; (v) \textit{Authorization}, adjusting legitimacy of access by each component to other parts of the system; and (vi) \textit{Accountability}, detection of any potential attacks and faults in the system  \cite{daswani2007foundations}. 

In this paper, we consider \textit{masquerading}, \textit{spoofing}, or \textit{impersonation} attack strategy on cyber-physical networked systems, which is a threat of authenticity. A broad range of wired and wireless networks including sensor networks, in-vehicle networks, and Internet-based networks are susceptible to be threatened by spoofing. For instance, the reader can refer to \cite{zheng2012design} for satellite mobile communication networks, \cite{villalba2011secure} for mobile ad hoc networks, and \cite{ueda2015security} for CAN-based networks. Spoof-resiliency techniques would be essential for all of these network setups to detect and/or mitigate the adversarial effects. However, mostly in literature, the spoof-resilient algorithms are studied for the interactions between only two agents: a spoofer and a normal \cite{sun2012sack2,magiera2015detection,su2016stealthy,dutta2017confiscating,zhang2017strategic}. For example, \cite{magiera2015detection} presents an application of spatial processing methods for spoofing detection and mitigation. Also, a GPS spoofing scenario is formulated as a constrained optimization problem and an effective solution is provided to compute the falsified GPS measurement of each time instant \cite{su2016stealthy}. The false-data injection attack on unmanned vehicles is investigated in \cite{dutta2017confiscating}. Although, this differs a little from spoofing attacks. The attacker masquerades as a disturbance for control system of a vehicle and deviates its path smoothly. Furthermore, a game-theoretic approach is developed in \cite{zhang2017strategic} to counteract spoofing attacks. However, a common point all the above researches share is that there is no network of agents. Only two-side interplay scenarios are considered, where the spoofing or masquerading is the attacking method. Recently though, \cite{gil2017guaranteeing} and \cite{renganathan2017spoof} focused on the sequels of spoofing attack on the network of agents. However, both of these references use physical fingerprints of communication signals to undo the attacks, which is a different approach and cannot resist against onmniscient adversaries in practice. Despite \cite{gil2017guaranteeing}, in our work, the attackers do not leave any sign and thus the regular nodes cannot identify them. Also, omniscient attackers in our setup could break any type of signal encryption and perform masquerading. Particularly, our emphasis is on the resiliency of a network in terms of its topology that is a more basic level of counteraction to cyber  threats. Moreover, in \cite{gil2017guaranteeing}, attackers cause an availability threat by jamming the server with fake identities, which is a special case of our adversarial model. We combine adversarial capabilities of the so-called \textit{Byzantine} model, which is an integrity attack capable of sending inconsistent erroneous signals to the receivers introduced and used in \cite{lynch1996distributed,dibaji2019resilient}, with spoofing, that is use of other nodes' identities to send data on their behalf, and introduce a novel and more powerful adversarial model called \textit{smart spoofer}. In \cite{bonnet2016tight}, resiliency of synchronous networks is investigated against mobile Byzantine
adversaries that are different from our adversarial model. In our setting, smart spoofers can use the asynchrony of network communications to mislead the nodes with impersonated identities.

One of the targets of spoofers in network systems would be inserting erroneous values into the distributed state estimations performed by the nodes. Distributed state estimation algorithms are extensively studied in the literature \cite{alexandru2017limited,khan2010connectivity,park2017design,wang2017distributed}. However, all these research works focused on the interaction between dynamic system, observers and the graph topology. A  minimum cost communication graph which enables limited communication for decentralized estimation is investigated in \cite{alexandru2017limited} . The interplay between network connectivity, global observability, and system instability is studied in \cite{khan2010connectivity}. Necessary and sufficient conditions for existence of distributed observers are studied in \cite{park2017design}. Also, \cite{wang2017distributed} generalizes distributed observer design for LTI systems with singular transition matrices. None of the above research works consider communication security among the physical and cyber layers. The resilience of distributed observers against cyber attacks has recently received more attention. For instance, the resiliency of LTI systems has been investigated in \cite{mitra2018resilient,mitra2018secure} against Byzantine attacks. However, our adversarial model is more complex by considering the impersonation capability of adversaries. We also consider asynchrony and delays in communications and propose a randomization strategy for relaxing the imposed topology constraints for secure distributed estimation problem.

In the current paper, we consider impersonation on a network of distributed observers for an LTI system. Like the network communication settings in \cite{dibaji2017automatica,dibaji2017resilient}, the observers communicate with bounded delays and asynchrony; however, they must deal with stronger attacks, i.e. smart spoofing. Similar to \cite{mitra2018resilient}, the regular (un-attacked) nodes are partitioned to source nodes and follower nodes, where source nodes can detect the corresponding eigenvalues and via distributively constructing a directed acyclic graph (DAG), the associated state estimates disseminate through the network. In both DAG construction and estimation propagation, smart spoofers interfere to avoid convergence. We present a strategy based on local filtering that is able to defend against smart spoofing and define local subgraphs to mimic the graph behavior for analysis of the estimation convergence, turning into sufficient conditions on network topology based on graph robustness that is a connectivity measure (see \cite{dibaji2019systems,dibaji2019resilient} for application of similar filtering algorithms in consensus problem). Consistency of the defined spoofing model with the network security literature, consideration of delays, asynchrony, and accurate assumptions in network communications make our proposed algorithms, update rule, and concluding results more practical in real world applications. In the development of our results and the proofs, we adopted the concept of motifs, the smallest possible subgraphs of the original network with certain properties, as a new proof technique. We analyze how the information is disseminated through the motifs. All in all, the main contributions of this paper to the literature are:
\begin{itemize}
\item Introducing, modeling and formulating a new type of cyber attack in asynchronous network settings which inherits the properties of both Byzantine adversaries and spoofing agents, called smart spoofing.
\item Analyzing the vulnerability of asynchronous networks to smart spoofers and proposing a resilient distributed state estimation strategy for a class of LTI systems.
\item Using motifs, as the smallest possible repeating patterns in a network, to mathematically analyze the topology constraints required for convergence of the distributed state estimation.
\item Presenting a randomized update rule to relax the spoof-resilient topology constraint required for convergence of the distributed state estimation.
\end{itemize}

The paper is organized as follows. The preliminaries and problem statement come in section \ref{Sect: Preliminaries}. In section \ref{Sect: ResDisObs}, we take a look at the resilient distributed estimation scheme and the local filtering-based algorithm that we used in this paper. Our main results are presented in Section \ref{Sect: MainResults}. We put forward the simulation results in Section \ref{Sect: Simulations}. Finally, we conclude the paper and discuss the future tendency of the research in Section \ref{Sect: Conclusion}.
 
\section{Preliminaries and Problem Statement}\label{Sect: Preliminaries}

\subsection{Notations}

\subsubsection{Graph Theory}
A directed graph is represented by $\mathcal{G} = (\mathcal{V},\mathcal{E})$, where the set of nodes and edges are represented by $\mathcal{V} = \{ 1, \ldots ,N \}$ and $\mathcal{E} \subseteq \mathcal{V} \times \mathcal{V}$ respectively. An edge from node $j$ pointing to node $i$ implies data transmission from node $j$ to node $i$ and is denoted by $(j,i)$. The neighbourhood of the $i$-th node is defined by the set $\mathcal{N}_i = \{j \vert (j,i) \in \mathcal{E} \}$. A node $j$ is said to be an outgoing neighbour of node $i$ if $(i,j) \in \mathcal{E}$. A spanning sub-graph for $\mathcal{G}$ is a sub-graph of $\mathcal{G}$ which contains every vertex of $\mathcal{G}$. Consider node $v_1$ to $v_p$ of $\mathcal{G}$. A path is a sequence $(v_1,v_2, \ldots ,v_p)$ in which $(v_i,v_{i+1}) \in \mathcal{E}$ for $i=1, \ldots ,p-1$. The length of a path is measured by its number of edges. A cycle is a sequence $(v_1,v_2, \ldots ,v_p,v_1)$ in which $(v_i,v_{i+1}) \in \mathcal{E}$ for $i=1, \ldots ,p-1$ and $(v_p,v_1) \in \mathcal{E}$. A directed acyclic graph (DAG) is a directed graph which has no cycles.

For the consensus-based state estimation rule designed in this paper, the critical topological notion is graph robustness, which is a connectivity measure of graphs (see \cite{leblanc2013resilient}).

\begin{definition} \rm \label{df.reachable}
($r$-reachable set) For a graph $\mathcal{G}=(\mathcal{V},\mathcal{E})$ and a set $\mathcal{C} \subset \mathcal{V}$, we say that $\mathcal{C}$ is an $r$-reachable set if there exists an $i \in \mathcal{C}$ such that $\vert \mathcal{N}_i \setminus \mathcal{C} \vert \geq r$, where $r \in \mathbb{N}_+$.
\end{definition}

\begin{definition} \rm \label{df.robust}
(Strongly $r$-robust w.r.t. $\mathcal{S}$) For a graph $\mathcal{G}=(\mathcal{V},\mathcal{E})$, a set of nodes $\mathcal{S} \subset \mathcal{V}$ and $r \in \mathbb{N}_+$, we say that $\mathcal{G}$ is strongly $r$-robust with respect to $\mathcal{S}$, if for any non-empty subset $\mathcal{C} \subseteq \mathcal{V} \setminus \mathcal{S}$, $\mathcal{C}$ is $r$-reachable. 
\end{definition}

\subsubsection{Linear Algebra}
The set of all eigenvalues of a matrix $A$ is denoted by $\sigma (A)$. The set of all marginally stable and unstable eigenvalues of a matrix $A$ is denoted by $\sigma_U(A)=\{ \lambda \in \sigma (A) \vert \vert \lambda \vert \geq 1 \}$. We use $a_A(\lambda)$ and $
g_A(\lambda)$ to denote the algebraic and geometric multiplicities, respectively, of an eigenvalue $\lambda \in \sigma (A)$. An eigenvalue $\lambda$ is said to be simple if $a_A(\lambda)=g_A(\lambda)=1$.

\begin{figure}[t]
	\def \svgscale{.6}
	\hspace{.5cm}
	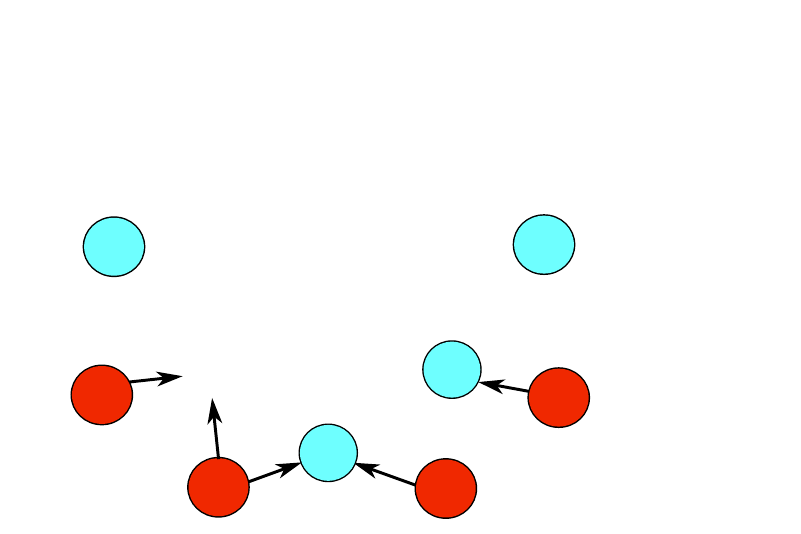
    \caption{A typical cyber-physical system: In the physical layer, the plant dynamic (maybe unstable in some modes) propagates over time. In the cyber layer, the plant's outputs are monitored by a network of distributed observers ($R_1,R_2,\ldots,R_{\bar{N}}$) while only some of them are directly connected to the plant. The network is threatened by adversarial nodes ($s_1,s_2,\ldots,s_f$).} \label{fig.cyphy} 
\end{figure} 

\subsection{System Dynamics and Distributed Observers}
Consider the following discrete-time LTI system

\begin{equation} \label{eq.dyn}
x[k+1]= Ax[k],
\end{equation}
where $k \in \mathbb{N}$ is the discrete-time index, $x[k] \in \mathbb{R}^n$ is the state vector and $A \in \mathbb{
R}^{n\times n}$ is the system matrix. The system is observed by an $N$-node network $\mathcal{G}=(\mathcal{V},\mathcal{E})$. Access of the $i$-th node to the measurement of time instant $k$ is given by

\begin{equation} \label{eq.obs}
y_i[k]=C_i x[k],
\end{equation}
where $y_i[k] \in \mathbb{R}^{r_i}$ and $C_i \in \mathbb{R}^{r_i \times n}$. 
For computational or control purposes, each node needs to estimate the entire system state $x[k]$. Nodes of the network $\mathcal{G}$ are called \textit{distributed observers} if they maintain and update the estimates using only their own measurements and those received from their neighbors. Fig.~\ref{fig.cyphy} shows the layout of a typical cyber-physical system threatened by adversarial nodes. Let $\hat{x}_i[k]$ denote the state estimate of node $i$ at each time step $k$. The following definition describes the objective of the distributed estimation scheme.

\begin{definition} \rm \label{df.omn}
(Omniscience) Over the $N$-node network $\mathcal{G}$, the distributed observers are said to achieve omniscience if $\lim_{k \to \infty} \vert \hat{x}_i[k]-x[k] \vert = 0, \forall i \in \{ 1, 2, \ldots , N \}$.
\end{definition}

\subsection{Adversarial Model}
We consider an adversarial model that is able to threaten the following system protection services: \textit{authentication}, \textit{authorization}, \textit{confidentiality}, \textit{integrity} and \textit{availability}. In what follows, we formally define the abilities of such an adversarial node.

\begin{definition} \rm \label{df.spf}
(Smart Spoofer) An adversarial node is called a \textit{smart spoofer} if it has the following capabilities:

\begin{itemize}
\item[1)] The adversarial node can have complete knowledge about the topology, plant dynamics, and information flow over the network at all time steps.
\item[2)] The adversarial node can refuse to perform any preassigned algorithm and can send arbitrary values to each of its neighbors at the same time step.
\item[3)] The adversarial node can send its data with intended delays and asynchrony.
\item[4)] The adversarial node can impersonate other nodes and send arbitrary data with their identities.
\end{itemize}

\end{definition}

The first two actions are performed by Byzantine adversaries, while the last one is performed by a threat called \textit{spoofing} or \textit{masquerading} in \cite{daswani2007foundations} that directly threatens the authentication among systems' protection services. In fact, the introduced adversarial model is an advanced spoofing threat with additional capabilities of Byzantine adversaries that we call \textit{smart spoofing}. Note that we use the terms \textquotedblleft spoof\textquotedblright \ and \textquotedblleft impersonate\textquotedblright \ interchangeably in this paper.

It is apparent that no distributed estimation algorithm would succeed if all the nodes are adversarial. So, the set of nodes $\mathcal{V}$ is partitioned into two subsets of regular nodes and adversarial nodes denoted by $\mathcal{R}$ and $\mathcal{A} = \mathcal{V} \setminus \mathcal{R}$, respectively. In the literature of distributed fault-tolerant algorithms, a common assumption is to assign an upper bound $f$ to the total number of adversarial nodes in the network, which is known as $f$-total adversarial model. To consider a large number of adversaries in large scale networks, locally bounded fault models are used, as in \cite{pelc2005locallybounded}, defined below.

\begin{definition} \rm \label{df.flmod}
($f$-local smart spoofer model) A set $\mathcal{A}$ of smart spoofers is $f$-locally bounded if it contains at most $f$ smart spoofers in the neighbourhood of any of the regular nodes, i.e. $\vert \mathcal{N}_i \cap \mathcal{A} \vert \leq f, \forall i \in \mathcal{V} \setminus \mathcal{A}$.
\end{definition}

Similarly, any distributed estimation algorithm fails if a smart spoofer can impersonate all the network nodes. Thus, to tackle the problem, we impose an upper bound for the number of nodes that smart spoofers can send data on their behalf as follows.

\begin{definition} \rm \label{df.scap}
(Capacity of smart spoofers) The maximum number of nodes that a smart spoofer can send data on their behalf at each time step, including itself, represents its capacity and is denoted by $\alpha \geq 1$.
\end{definition}

\subsection{Problem Statement}
We aim to formulate the resilient version of omniscience problem (Def.~\ref{df.omn}), where the network is under smart spoofers' attack with two challenging constraints on the network communications, i.e. asynchrony and delays. Accordingly, we set the following assumptions on the network communications protocol remarking the practical aspects of our results.

\begin{assumption} \rm \label{asum1}
All nodes update by a global clock. This means that the sampling time $T$ is the same for all observers.
\end{assumption}

\begin{assumption} \rm \label{asum2}
All nodes communicate through serial links and have access to only the last data packet they have received from neighbor nodes.
\end{assumption} 

\begin{assumption} \rm \label{asum3}
All nodes make, at least, one update within $\bar{k}$ steps and communication delays are upper-bounded by $\bar{\tau}$. 
\end{assumption}

Referring to the introduced LTI dynamic system and the observation model of the network, we put forth a more complicated version of the standard omniscience problem (Definition~\ref{df.omn}) in the following definition.

\begin{definition} \rm \label{df.rsomn}
(Resilient Omniscience) Given a system dynamics of the form (\ref{eq.dyn}), a network represented by the graph $\mathcal{G}$, and an observer model at each node given by (\ref{eq.obs}), a state estimation design is said to achieve \textit{resilient omniscience} if $\lim_{k \to \infty} \vert \hat{x}_i[k]-x[k] \vert = 0, \forall i \in \mathcal{R}$, regardless of the actions of any $f$-locally bounded set of smart spoofers.
\end{definition}

This paper investigates the design of a distributed estimation scheme, proper to cope with smart spoofers threatening a given cyber network that is observing an LTI system. For this purpose, based on the assumptions on the network communications protocol and the smart spoofer adversarial model, we first present the distributed estimation scheme under a specific network topology. Next, we analyze the required topology constraints which guarantee resilient omniscience of all regular nodes that update their estimates using the proposed estimation strategy.

\section{Resilient Distributed Observers}\label{Sect: ResDisObs}
Under Byzantine adversarial model introduced in \cite{lynch1996distributed}, the network achieves omniscience by distributed observers proposed in \cite{mitra2018resilient}. The design performs observation task by separating detectable and undetectable eigenvalues of the system and the related states. Here, we use a similar scheme with a different distributed estimation rule, proper for resilient omniscience defined in Definition~\ref{df.rsomn}. To this end, consider a Jordan canonical decomposition of state transition matrix $A$ with the following assumption on its eigenvalues. This assumption is made for sake of simplicity, is not restrictive, and can be relaxed by some extra mathematical efforts and the techniques denoted in \cite{mitra2018resilient}, which is not the focus of this paper.

\begin{assumption} \rm \label{asum4}
Eigenvalues of $A$ are real and simple.
\end{assumption}
This assumption allows us to diagonalize $A$ by the coordinate transformation matrix $\Psi=[\psi_1, \ldots ,\psi_n]$, where $\psi_1, \ldots ,\psi_n$ are $n$ linearly independent eigenvectors of $A$. With $z[k]=\Psi^{-1}x[k]$, the system (\ref{eq.dyn}) is transformed into the form

\begin{equation} \label{eq.map}
\begin{split}
z[k+1]&=\bar{A}z[k],\\
y_i[k]&=\bar{C}_i z[k], \ \forall i \in \{ 1, \ldots  ,N \},
\end{split}
\end{equation}
where $\bar{A}=\Psi^{-1}A\Psi$ is a diagonal matrix, and $\bar{C}_i=C_i\Psi$. The eigenvalues of $\bar{A}$ (which are the same as those of $A$) are denoted by $\lambda_1, \ldots ,\lambda_n$. Each regular node $i$ distinguishes its detectable and undetectable eigenvalues by PBH test and divides them into the sets $\mathcal{D}_i$ and $\mathcal{U}_i$, respectively. Also, the set of regular nodes are partitioned into sets of \textit{source nodes} and \textit{follower nodes} as defined below.

\begin{definition} \rm \label{df.Sj}
(Source nodes and follower nodes) For each $\lambda_j \in \sigma_U (A)$, the set of nodes that can detect $\lambda_j$ is denoted by $\mathcal{S}_j$, and is called the set of source nodes for $\lambda_j$. The rest of the nodes are called follower nodes.
\end{definition}

Each regular node, depending on being a source node or a follower node for $\lambda_j$, adopts a different strategy for estimating the related states. 

\subsection{State Estimation by Source Nodes}
Referring to \cite{mitra2018resilient}, each regular node $i$ relies on its own measurements and uses a local Luenberger observer to estimate a patch of the states $\hat{z}_{\mathcal{D}_i}$ associated to all $\lambda_j \in \mathcal{D}_i$. To this end, let $\Lambda_i \in \mathbb{R}^{\rho_i \times \rho_i}$ (recall $\rho_i = \vert \mathcal{D}_i \vert$) be a diagonal matrix consists of the detectable eigenvalues in $\mathcal{D}_i$ and $\bar{C}_{\mathcal{D}_i} \in \mathbb{R}^{r_i \times \rho_i}$ stand for the columns of $\bar{C}_i$ corresponding to those eigenvalues. Then we have: 

\begin{equation} \label{eq.lu}
\hat{z}_{\mathcal{D}_i}[k+1]=\Lambda_i \hat{z}_{\mathcal{D}_i}[k] + L_i(y_i[k]-\bar{C}_{\mathcal{D}_i} \hat{z}_{\mathcal{D}_i}[k]),
\end{equation}
where $L_i^j \in \mathbb{R}^{\rho_i \times r_i}$ is the observer gain matrix at node $i$. Since the pair $(\Lambda_i , \bar{C}_{\mathcal{D}_i})$ is detectable, $L_i$ can be chosen in a way that $(\Lambda_i - L_i \bar{C}_{\mathcal{D}_i})$ is Schur stable, so $\lim_{k \to \infty} \vert \hat{z}_i^j[k]-z^j[k] \vert =0$ based on Assumption~\ref{asum4}.

\subsection{Distributed State Estimation by Follower Nodes}
A regular node $i$ cannot estimate a portion of the states associated with its undetectable eigenvalues of the system. In fact, the regular node $i$ is a follower node in estimating the sub-state related to the eigenvalues $\lambda_j \in \mathcal{U}_i$ and needs to receive information from its neighbors through a directed acyclic graph for each $\lambda_j$ (defined later) rooted in the set of associated source nodes. In what follows, we propose an updating rule for the follower node $i$ accomplishing its estimation task in a network with communication delays and \textit{partial asynchrony}\footnote{The term \textit{partial asynchrony} refers to the case where nodes share some level of synchrony by having the same sampling times; however, they make updates at different times based on bounded information delays \cite{bertsekas1989parallel}}. 

There is a major difference between resilient distributed state estimation rather than resilient consensus using local filtering presented in previous research works such as \cite{dibaji2017resilient}. Considering asynchronous network communications and observability of dynamics of the physical layer  is a new challenge in design of the update rule and leads to a totally different convergence analysis. Combining the ideas behind the consensus update rules in \cite{mitra2018resilient} and \cite{dibaji2017resilient}, we present a novel update rule with the following algorithm based on local filtering method for node $i$ to update its own state estimate for $\lambda_j \in \mathcal{U}_i$.

\begin{itemize}
\item[1)] Each regular node $i$, at each time step $k$ when it wants to update its estimate, gathers the state estimate of $z^j[k]$ \textit{lastly} received from only the nodes in $\mathcal{N}_i^j  \subseteq \mathcal{N}_i$ ($\mathcal{N}_i^j$ represents the set of neighbors in the DAG related to $\lambda_j$ that is selected by Algorithm~\ref{alg1} for each regular node $i$, which will be proposed later) and arranges them from the largest to the smallest.
\item[2)] Node $i$ drops the largest and smallest $(\beta +1)f$ estimates ($\beta$ will be defined later) and executes the following update rule:

\begin{equation} \label{eq.consda}
\hat{z}_i^j[k+1]=\lambda_j \sum_{\ell \in \mathcal{N}_i^j} \omega_{i \ell}^j[k] \hat{z}_{\ell}^j[k - \widetilde{k}_{i \ell}[k] - \tau_{i\ell}[k]],
\end{equation}

where $\tau_{i\ell}[k]$ is the time delay of the last data packet that node $i$ has received from node $\ell$ until time instant $k$ (it may be time-varying), $\widetilde{k}_{i \ell}[k]$ is the time steps elapsed from the time that node $i$ receives the packet of the node $\ell$ sent the last time before time $k$ up to the time it makes an update ($\widetilde{k}_{i \ell}[k] < \bar{k}$), and $\omega_{i \ell}^j[k]$ is the weight that the $i$-th node dedicates to the $\ell$-th node at the $k$-th time instant for the estimation of $z^j[k]$. The weights are non-negative and chosen to comply $\sum_{\ell \in \mathcal{N}_i^j} \omega_{i \ell}^j[k]=1, \forall \lambda_j \in \mathcal{U}_i$. Node $i$ removes the $(\beta +1)f$ largest and $(\beta +1)f$ smallest estimates from $\mathcal{N}_i^j$ by setting their associated weights to 0. Note that delays have an upper bound ($\tau_{i \ell} < \bar{\tau}$).
\end{itemize}

In practice, each node $i$ has a memory for each of its neighbors where stores the most recently received data. Node $i$ uses the most recent estimate values received from its neigbours in $\mathcal{N}_i^j$ in update rule (\ref{eq.consda}), regardless of delays and asynchrony in communications.

\section{Main Results}\label{Sect: MainResults}
In this section, we provide the main results of the paper giving the analysis of the spoof-resilient distributed estimation strategy and the topology constraints under which our adopted algorithms and update rule succeed. 

First, we consider how harsh the misbehaviour effects of a smart spoofer would be in the network. In Definition~\ref{df.scap}, we introduced the spoofing capacity in each time step. In the following lemma, we generalize capacity of smart spoofers for a period of time.

\begin{lemma} \rm \label{lm.kscap}
Let Assumption~\ref{asum3} hold and capacity of a smart spoofer be $\alpha$. Then, each smart spoofer is able to send data on behalf of $\beta = \alpha \bar{k} -1$ regular nodes within each consecutive $\bar{k}$ steps.
\end{lemma}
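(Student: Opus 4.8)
The plan is to fix an arbitrary window of $\bar{k}$ consecutive time steps, say $\mathcal{W}=\{k_0,k_0+1,\ldots,k_0+\bar{k}-1\}$, and to count the maximum number of distinct regular identities a single smart spoofer can inject into the neighborhood memory of a victim node over that window. By the capacity in Definition~\ref{df.scap}, at every individual step the spoofer may emit packets under at most $\alpha$ identities (its own included), so a crude budget over the window is $\alpha\bar{k}$ identity-uses. First I would invoke Assumption~\ref{asum3}: every regular node performs at least one update inside $\mathcal{W}$, so it suffices to bound what the spoofer can present to such a node at the single time instant $k^\ast\in\mathcal{W}$ at which it updates.

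The decisive step exploits asynchrony together with Assumption~\ref{asum2}, which states that a node retains only the last packet received under each neighbor identity. I would have the spoofer spend its per-step budget on a \emph{fresh} set of identities at each of the $\bar{k}$ steps, so that the identities used across the window are pairwise distinct. Then every such packet is trivially the last one received under its identity, and---using capabilities~3) and~1) of Definition~\ref{df.spf}, i.e. the spoofer's control of delays (bounded by $\bar{\tau}$) and its full knowledge of the update schedule---each of these packets can be timed to be present in the victim's memory at $k^\ast$. Consequently all $\alpha\bar{k}$ distinct identities register simultaneously as valid, distinct neighbors at the victim's update, rather than merely the $\alpha$ identities that a synchronous model would allow.

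It remains to pass from $\alpha\bar{k}$ to $\beta=\alpha\bar{k}-1$. Here I would observe that the spoofer is itself a node of $\mathcal{G}$ occupying one fixed identity slot in the victim's neighborhood; exactly one of the $\alpha\bar{k}$ identity-uses is therefore consumed by its own identity, while the remaining $\alpha\bar{k}-1$ are free to impersonate regular nodes. Note that asynchrony is precisely what lets the spoofer expend its own identity only once per window rather than at every step, which is the source of the amplification over the synchronous bound. I expect the main obstacle to be the rigorous justification of this amplification: showing that the spoofer can orchestrate its delays so that all $\alpha\bar{k}-1$ impersonated packets are simultaneously the latest packets from their respective identities at the victim's update instant, uniformly over every choice of window and victim, and confirming that adjacent or overlapping windows do not permit the adversary to double-count identities beyond this bound.
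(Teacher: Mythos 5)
Your proposal is correct and follows essentially the same counting argument as the paper: a per-step budget of $\alpha$ identities over $\bar{k}$ steps, minus one use reserved for the spoofer's own identity, giving $\alpha - 1 + \alpha(\bar{k}-1) = \alpha\bar{k}-1$. The only substantive difference is cosmetic: the paper justifies the ``$-1$'' via Assumption~\ref{asum3} (the spoofer must transmit under its own identity at least once per window or be detected), and it defers the delay-timing and last-packet-memory considerations you elaborate on to Proposition~\ref{prop.nc} rather than folding them into this lemma.
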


\begin{proof}
According to Definition~\ref{df.scap}, a smart spoofer can send data on behalf of $\alpha$ nodes including itself at each time step. Also, according to Assumption~\ref{asum3}, all nodes have to make at least one update within consecutive $\bar{k}$ steps (note that if a node does not follow this rule can be detected as an adversarial node by the regular nodes). Consider the time interval $k+1 \leq t \leq k+\bar{k}$. Let the smart spoofer choose to make an update with its own identity at $t=k_s$, where $k+1 \leq k_s \leq k + \bar{k}$. Considering each consecutive $\bar{k}$ steps, the smart spoofer $s$ has $\alpha -1$ capacity for impersonation at $k_s$ and $\alpha$ capacity in other $\bar{k}-1$ steps. Therefore, the smart spoofer is able to impersonate $\alpha \bar{k}-1$ nodes overall within $\bar{k}$ steps, i.e. $\alpha - 1 + \alpha (\bar{k}-1)=\alpha \bar{k}-1$.
\end{proof}

In fact, Lemma~\ref{lm.kscap} indicates that we cannot simply replace the smart spoofers and impersonated nodes with Byzantine nodes. Because the key question is that how many Byzantine nodes have the same effect of a smart spoofer with capacity $\alpha$. This is what we mathematically clarified in Lemma 1. Asynchrony lets the adversarial nodes spoof a specific number of regular nodes within each $\bar{k}$ time-steps. Besides, from adversarial nodes' perspective, this spoofing (and sending false data packets) must be continued for all the future time -- in every $\bar{k}$ steps – in order to be effective. Thus, the distributed algorithms of regular nodes must be modified to be resilient against the attack. In other words, network providers need to be aware that in a network with asynchrony (almost all the networks are practically asynchronous), there is the possibility of stronger attacks rather than Byzantine attacks. The following necessary condition on the network communications  formally states when a smart spoofer can impersonate regular nodes.

\begin{proposition} \rm \label{prop.nc}
Consider a network of nodes interconnected by complete graph $\mathcal{G}$, which contains smart spoofer $s \in \mathcal{N}_i^j$, where $i \in \mathcal{R}$. Suppose that $s$ is able to impersonate, at least, one regular node within each consecutive $\bar{k}$ steps ($\beta >0$). Smart spoofer $s$ can impersonate a regular node $\ell \in \mathcal{N}_i^j \cap \mathcal{R}$ for node $i$ at a time instant $t>k$ only if the packet which is broadcast by node $\ell$ at time instant $t=k$ is received by node $i$ with delay $\widetilde{k}_{i \ell}[k]+\tau_{i\ell}[k]>0$.
\end{proposition}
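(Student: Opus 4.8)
The plan is to prove the contrapositive of this ``only if'' statement: I will assume that the genuine packet broadcast by node $\ell$ at time $k$ reaches node $i$ with zero total delay, i.e. $\widetilde{k}_{i\ell}[k]+\tau_{i\ell}[k]=0$, and show that under Assumption~\ref{asum2} no smart spoofer $s$ acting at a later instant $t>k$ can make node $i$ adopt a spoofed value in place of that packet. First I would unpack the two nonnegative delay components: $\tau_{i\ell}[k]$ is the transmission delay, so the packet sent at $k$ physically arrives at $i$ at time $k+\tau_{i\ell}[k]$, while $\widetilde{k}_{i\ell}[k]$ is the further asynchrony gap before node $i$ actually performs the update in which it consumes that packet, so the value is used at time $k+\tau_{i\ell}[k]+\widetilde{k}_{i\ell}[k]$. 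Since both quantities are nonnegative, the hypothesis $\widetilde{k}_{i\ell}[k]+\tau_{i\ell}[k]=0$ forces $\tau_{i\ell}[k]=0$ and $\widetilde{k}_{i\ell}[k]=0$ simultaneously, meaning that $\ell$'s genuine packet is received and immediately consumed by $i$ at the very instant $k$ it is broadcast; at that update node $i$ uses the true value $\hat{z}_\ell^j[k]$.

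The core of the argument is then a causality-plus-buffer observation. By Assumption~\ref{asum2} node $i$ holds, for the link labelled $\ell$, only the most recently received packet, and in the update above this stored packet is precisely $\ell$'s genuine broadcast from $k$, since it has just been delivered with no delay. A spoofer that impersonates $\ell$ at a strictly later instant $t>k$ creates its forged packet no earlier than $t$; by causality this packet cannot be present in the buffer of $i$ before time $t$, and in particular it is absent at time $k<t$. Hence it cannot overwrite, or be mistaken for, the genuine entry that $i$ consumes at $k$, so the spoofer is unable to corrupt the value associated with $\ell$'s broadcast at $k$. This contradicts the assumed success of the impersonation and establishes that $\widetilde{k}_{i\ell}[k]+\tau_{i\ell}[k]>0$ is necessary.

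The step I expect to require the most care is pinning down what ``successful impersonation of the packet broadcast at $k$'' means operationally, so that the contradiction is airtight: I must tie the spoofer's opportunity window to the interval between the delivery of $\ell$'s genuine packet and node $i$'s consumption of it, and argue that this window is empty exactly when both delay components vanish. In particular I would make explicit that a forged packet sent at $t>k$ can only influence updates of $i$ occurring at or after its arrival, never the update at time $k$, and that the ``last packet only'' rule of Assumption~\ref{asum2} prevents any earlier forged packet from surviving past the zero-delay delivery of the genuine one. Handling these ordering details carefully --- rather than any nontrivial computation --- is the crux of the proof.
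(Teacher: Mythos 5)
Your argument is correct and follows essentially the same route as the paper: both assume the total delay $\widetilde{k}_{i\ell}[k]+\tau_{i\ell}[k]$ vanishes, note that the genuine packet is then delivered and consumed at instant $k$ itself, and dispose of the forged packet by the same two ordering cases (arriving after $k$, hence too late to affect the update already made; or arriving before $k$, hence overwritten under the last-packet-only rule of Assumption~\ref{asum2}). The only cosmetic difference is that the paper phrases it as a contradiction and explicitly invokes Assumption~\ref{asum2} to rule out simultaneous arrival at $t=k$, a detail worth adding but not a gap.
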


\begin{proof}
We prove by contradiction. Considering Assumption~\ref{asum1}, let node $\ell \in \mathcal{N}_i^j$ broadcast a data packet at $t=k$ and node $i \in \mathcal{R}$ receives the packet at the same time ($\tau_{i\ell}[k]=0$) and use it for its next update at the same time ($\widetilde{k}_{i \ell}[k]=0$). Also, suppose that smart spoofer $s$ decides to impersonate node $\ell$ for node $i$. There are two possibilities for the arrival time of the packet sent by $s$ to $i$. The packet can arrive either before or after the time $t=k$ (the time instant $t=k$ is excluded as it contradicts the Assumption~\ref{asum2}). In case the packet sent by $s$ arrives at any time $t>k$, the node $i$ has already accepted the last packet it received, that is, the packet of node $\ell$ received at $t=k$, and has already made an update. Otherwise, if the packet sent by $s$ arrives at any time $t<k$, then $i$ will receive the packet sent by node $\ell$ at $t=k$ and since, according to Assumption~\ref{asum2}, all nodes only access the last data packet they receive. In either case, the smart spoofer $s$ fails to impersonate node $\ell$ for node $i$, which is a contradiction. This completes the proof.
\end{proof}

Note that the necessary condition of Proposition~\ref{prop.nc} is independent of amounts of communication delays. This is because we would like to deal with smart spoofers that can impose arbitrarily bounded amount of delays on their links to the regular nodes, that is, if a smart spoofer wants to impersonate node $\ell \in \mathcal{N}_i^j$, it can arrange to send the packet to node $i$ after node $\ell$ with appropriate delay so that it will be received after the packet sent by node $\ell$. Then, node $i$ accepts a packet sent by $s$ with identity of node $\ell$ as it is the last packet received.

According to Proposition~\ref{prop.nc}, the best case for the regular node $i$ is that both $\widetilde{k}_{i \ell}[k]=0$ and $\tau_{i\ell}[k]=0$, so the smart spoofers in the neighborhood of $i$ cannot impersonate neighbors of the node $i$. However, even if we suppose that $\widetilde{k}_{i \ell}[k]=0$, i.e. node $i$ has not any lag in updating its estimate using the last data received from node $\ell$, regular nodes cannot be sure about spoofing attack. In practice, the regular nodes cannot guess, before receiving a packet, whether it will be received with delay and, if so, how much the delay will be (although communication links' delays in real network systems are inevitable). Besides, as we said, smart spoofers can send data packets with intended delays. So, the regular nodes must be aware that all the communications may be done with delays in each time step. Therefore, to consider the worst case, we develop our further results on required topology constraints by assuming that the necessary condition on delays is satisfied for all time in the network.

\subsection{Spoof-Resilient Mode Estimation Directed Acyclic Graph (SR-MEDAG)}
Recall the local filtering for resilient consensus based estimation law (\ref{eq.consda}). Inspired by the algorithm presented in \cite{mitra2018resilient}, for construction of directed acyclic graphs associated with undetectable eigenvalues of an LTI system, we present a spoof-resilient algorithm which is distributively executed by all the regular nodes. The overall distributed estimation scheme constitutes the construction of these sub-graphs and the prescribed local filtering-based algorithm which are performed in parallel. In what follows, we define the directed acyclic graphs that are paths for information flow over the network.
\begin{algorithm}[t]
\caption{SR-MEDAG Construction Algorithm}\label{alg1}
\begin{algorithmic}[1]
 \While{$k \leq \bar{K}_j$}
   \For{$\lambda_j \in \sigma_U (A)$}
  	$c_i(j)=0, \mathcal{N}_i^j=\emptyset$.\\
  	\If{$i \in \mathcal{S}_j$}
   	\State Node $i$ updates $c_i(j)$ to 1, sets $\mathcal{N}_i^j=\emptyset$ and broadcasts a message
   	\State  $\chi_j$ (e.g. ``110") to its neighbors.
  	\EndIf
  	\If{$i \in \mathcal{V} \setminus \mathcal{S}_j$}
  	\If{($c_i(j)=0$, and node $i$ has received $m_j$ from at least $2(\beta +1)f +1$ distinct neighbors)}
   	\State Node $i$ updates $c_i(j)$ to 1 and stores the labels of the 
   	\State neighbors from which it received $\chi_j$ to $\mathcal{N}_i^j$.
  	\EndIf
  	
  	\If{$c_i(j)=1$}  	
   	\State Node $i$ broadcasts $\chi_j$ to its neighbors.
  	\EndIf
  	
  	\EndIf
 \EndFor
 \EndWhile
 \State {\bf Result}: $ \mathcal{N}_i^j$, $\forall \lambda_j \in \sigma_U (A)$
\end{algorithmic}
\end{algorithm} 

\begin{definition} \rm \label{df.SR-MEDAG}
(SR-MEDAG) For each eigenvalue $\lambda_j \in \sigma_U (A)$, the spanning sub-graph $\mathcal{G}^j = (\mathcal{V},\mathcal{E}^j)$ of $\mathcal{G}$ is Spoof-Resilient Mode Estimation Directed Acyclic Graph (SR-MEDAG) if it has the following properties:
\begin{itemize}
\item[1)] If $i \in (\mathcal{V} \setminus \mathcal{S}_j) \cap \mathcal{R}$, then $\vert \mathcal{N}_i^j \vert \geq 2(\beta +1) f+1$.
\item[2)] There is a partition of $\mathcal{R}$ into the sets $\mathcal{L}_0^j, \mathcal{L}_1^j, \ldots , \mathcal{L}_{\xi_j}^j$, such that $\mathcal{L}_0^j = \mathcal{S}_j \cap \mathcal{R}$, and $ \mathcal{N}_i^j \cap \mathcal{R} \subseteq \bigcup_{r=0}^{m-1} \mathcal{L}_r^j$ for $i \in \mathcal{L}_m^j$, where $1 \leq m \leq \xi_j$.
\end{itemize}
Also, $ \mathcal{N}_i^j \cap \mathcal{R}$ is the set of parent nodes of node $i$ and $\mathcal{L}_m^j$ is the $m$-th layer of $\mathcal{G}^j$. In fact, for each $\lambda_j$, we can organize the set of regular nodes of the graph $\mathcal{G}$ as a directed acyclic graph $\mathcal{G}_j$. In $\mathcal{G}_j$, the set of regular source nodes are denoted by $\mathcal{L}_0^j$. Also, the set of regular nodes which have at least one path with length of $m$ to $\mathcal{L}_0$ are in the $m$-th layer of $\mathcal{G}_j$ denoted by $\mathcal{L}_m$. Each regular node in $m$-th layer has at least $2(\beta +1)f+1$ parent nodes from the previous layers ($\bigcup_{r=0}^{m-1} \mathcal{L}_r^j$).
\end{definition}

Each regular node $i \in \mathcal{R}$ distributively executes the SR-MEDAG construction algorithm, presented as Algorithm~\ref{alg1}. The final result of the algorithm for node $i$ is the set $\mathcal{N}_i^j$ associated with every undetectable eigenvalue $ \lambda_j \in \sigma (A)$. By executing the algorithm at each time step $k$, node $i$ stores a counter value $c_i(j)$ and a list of indices $\mathcal{N}_i^j$ in persistent memories for each undetectable eigenvalue $\lambda_j$. The stored values in $\mathcal{N}_i^j \subseteq \mathcal{N}_i$ are the parent nodes' indices of node $i$ in the SR-MEDAG of $\lambda_j$. Each regular node $i$ starts with $c_i(j)=0$ and $\mathcal{N}_i^j=\emptyset$. If node $i$ was a source node for $\lambda_j$, i.e. $i \in \mathcal{S}_j$, it sets $c_i(j)=1$ and $\mathcal{N}_i^j=\emptyset$, then it begins and keeps broadcasting an arbitrary preset message $\chi_j$ to its neighbors\footnote{We used the term \textit{broadcast} considering the case of wireless networks. Regular nodes may transmit the information to their known outgoing neighbors in wired networks.} for at least $\bar{K}_j$ steps (later we prove that $\bar{K}_j$ is bounded). If node $i$ was a follower node for $\lambda_j$, i.e. $i \in \mathcal{V} \setminus \mathcal{S}_j$, it waits until it receives $\chi_j$ from at least $2(\beta +1)f+1$ distinct neighbors. Then, it sets $c_i(j)=1$, saves the indices of the neighbors from which it received $\chi_j$ as $\mathcal{N}_i^j$, begins and keeps broadcasting $\chi_j$ to its neighbors for at least $\bar{K}_j$ steps. Finally, we say that \textit{SR-MEDAG construction algorithm terminates for node $i$} if the counter value $c_i(j)=1$, $\forall \lambda_j \in \sigma_U (A)$. Also, we say that \textit{SR-MEDAG construction phase terminates for $\lambda_j$} if the counter value $c_i(j)=1$, $\forall i \in \mathcal{R}$.

Interestingly, it is not necessary for the regular nodes to know $\bar{K}_j$ (in that case, they have to execute the construction algorithm for all the future time not up to $\bar{K}_j$). Indeed, each regular node $i$ can begin updating its state estimates in parallel as soon as it sets $c_i(j)=1$ for $\lambda_j$ although the SR-MEDAG construction phase has not been terminated yet. However, we know that the construction phase will be terminated at some time instant in the future (bounded by $\bar{K}_j$) when all regular nodes will be able to update their own state estimates corresponding to each of the undetectable eigenvalues using the distributed consensus-based rule (\ref{eq.consda}). In this regard, consider that delay and asynchrony do not affect the output of the algorithm for each regular node $i$. Because node $i$ waits until it receives the predefined message $\chi_j$ from a specified number of nodes regardless of the time it takes. Indeed, asynchrony and bounded delays only postpone termination of the algorithm. 

Furthermore, one may concern that some of the regular nodes are exposed to be impersonated by smart spoofers at any time while they are executing the construction algorithm. In fact, each smart spoofer not only can impersonate regular nodes (by sending arbitrary messages other than the true message $\chi_j$ on behalf of them) but also can misbehave as follows:
\begin{itemize}
\item[i)] It chooses to transmit any message different from the true message $\chi_j$ from start to termination of the construction phase.
\item[ii)] It transmits the true message before the counter value is triggered by a regular node.
\item[iii)] It chooses not to transmit a message at all.
\end{itemize}
In the first case, regular nodes are able to identify the adversarial node as it goes against the rules of Algorithm~\ref{alg1}. In the latter two cases, the adversarial node is undetectable by regular nodes relying just on local information. However, later we discuss constraints on the graph topology so that adversarial nodes fail to make any problem neither for the construction algorithm nor the estimation process.

It is noteworthy that the upper bound for the parameter $\bar{K}_j$ in Algorithm~\ref{alg1} is a function of the parameter $\beta$, which is the capacity that asynchrony provides for smart spoofers to impersonate regular nodes. This upper bound would be different if we consider simply more Byzantine nodes instead of spoofers and impersonated nodes. Actually, another contribution of our paper with respect to \cite{mitra2018resilient} is the MEDAG construction algorithm and its convergence time. In the case of synchronous networks, each regular node updates only once and goes to sleep, while, in asynchronous networks, regular nodes have to continue updating up to $\bar{K}_j$ time-steps in order to complete the SR-MEDAG construction.
     
In the following theorem, we show that the sub-graphs distributively found by the regular nodes, after termination of the construction phase, satisfy properties of the SR-MEDAG.

\begin{theorem} \rm \label{th.SRMEDAG}
If the SR-MEDAG construction phase terminates for $\lambda_j \in \sigma_U(A)$, there exists a sub-graph $\mathcal{G}^j$ satisfying all the properties of an SR-MEDAG.
\end{theorem}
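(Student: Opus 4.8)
The plan is to let $\mathcal{G}^j=(\mathcal{V},\mathcal{E}^j)$ be the spanning subgraph with edge set $\mathcal{E}^j=\{(\ell,i):\ell\in\mathcal{N}_i^j\}$ returned by Algorithm~\ref{alg1} at termination, and to verify its two defining properties directly. Property~1 is essentially the halting condition of the algorithm: a regular follower $i\in(\mathcal{V}\setminus\mathcal{S}_j)\cap\mathcal{R}$ sets $c_i(j)=1$ only after receiving $\chi_j$ from at least $2(\beta+1)f+1$ distinct neighbours, and it records exactly those neighbours in $\mathcal{N}_i^j$; since the construction phase terminating for $\lambda_j$ means $c_i(j)=1$ for every regular node, we obtain $\vert\mathcal{N}_i^j\vert\geq 2(\beta+1)f+1$ for all such $i$.

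For Property~2 I would assign to each regular node $i$ the time step $T_i$ at which it first sets $c_i(j)=1$, with source nodes taking the minimal initial value and $\mathcal{N}_i^j=\emptyset$. The crux is a strict-precedence claim: if $\ell\in\mathcal{N}_i^j\cap\mathcal{R}$ then $T_\ell<T_i$. To prove it, note that $\ell$ is recorded in $\mathcal{N}_i^j$ only because, at the instant $i$ commits, the last packet $i$ holds under $\ell$'s identity is $\chi_j$. Since a spoofer impersonating a regular node transmits messages \emph{other than} $\chi_j$ during this phase, any $\chi_j$-packet bearing $\ell$'s identity was genuinely broadcast by $\ell$ itself; and a regular node broadcasts $\chi_j$ only after committing, i.e. at times $\geq T_\ell$. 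Combining this with non-negative delays and the read-before-send update order within each global tick (Assumptions~\ref{asum1}--\ref{asum2}), the packet can be present at $i$'s commit only if it was sent strictly before $T_i$, so $T_\ell\leq(\text{send time})<T_i$.

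Granting strict precedence, the relation ``$\ell$ is a regular parent of $i$'' on $\mathcal{R}$ is acyclic, since any cycle $i_1\to\cdots\to i_p\to i_1$ would force $T_{i_1}<T_{i_1}$. I would then layer $\mathcal{R}$ by this DAG: put $\mathcal{L}_0^j=\mathcal{S}_j\cap\mathcal{R}$ and let $\mathcal{L}_m^j$ collect the regular nodes whose longest regular-parent path from $\mathcal{L}_0^j$ has length $m$. Because $\mathcal{R}$ is finite the number of layers is bounded by some $\xi_j\leq\vert\mathcal{R}\vert-1$, every regular parent of a node in $\mathcal{L}_m^j$ sits in a strictly earlier layer so that $\mathcal{N}_i^j\cap\mathcal{R}\subseteq\bigcup_{r=0}^{m-1}\mathcal{L}_r^j$, and the parent-free source nodes correctly form $\mathcal{L}_0^j$. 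This yields Property~2 and hence that $\mathcal{G}^j$ is an SR-MEDAG.

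I expect the strict-precedence claim to be the main obstacle, as it is the only place where the adversary model and the asynchronous timing interact. Two cases must be excluded: a spoofer manufacturing a $\chi_j$-packet in a not-yet-committed regular node's name, ruled out by the stipulation that impersonation uses messages other than $\chi_j$; and a regular parent committing in the same tick as its child, ruled out by the discrete read-before-send ordering together with the non-negative, bounded delays of Assumptions~\ref{asum1}--\ref{asum3}. Once this is settled, the remaining layering is a routine topological sort.
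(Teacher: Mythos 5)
Your proof follows essentially the same route as the paper's: the paper likewise takes Property~1 as immediate from the algorithm's commit condition, proves acyclicity by the temporal contradiction that a regular node broadcasts $\chi_j$ only after committing (so a directed cycle of regular parents would force a node to precede itself), and then layers $\mathcal{R}$ by the longest-path length to $\mathcal{S}_j$ at commit time. Your write-up is somewhat more explicit about the strict-precedence claim $T_\ell < T_i$ and about the stipulation that impersonated identities never carry the true message $\chi_j$ during this phase, but both points are implicit in, and consistent with, the paper's own argument.
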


\begin{proof}
First, we prove by contradiction that the spanning sub-graph $\mathcal{G}^j$ is a directed acyclic graph. Suppose there is a directed cycle $iPi$, where $i$ and the nodes in $P$ belongs to $\mathcal{R}$. The path $P$ originates from $i$ which changes its counter value $c_i(j)$ from 0 to 1 and begins transmitting $\chi_j$ to its neighbors at a time instant $t=k_i^j$. Let the last node on the path $P$ be $\ell$. Clearly, node $i$ receives data from node $\ell$ at a time instant $t>k_i^j$. As an edge from $\ell$ is pointing to node $i$, node $i$ is supposed to receive the message $\chi_j$ from node $\ell$ even when its counter value $c_i(j)$ is set to $1$. This contradicts what node $i$ has to do according to Algorithm \ref{alg1}. The same argument holds for every regular node belonging to $\mathcal{G}^j$. 

Next, we associate the notion of path length, referring to graph theory, to the found sub-graphs after the termination of the SR-MEDAG construction phase to show that the set $\mathcal{R}$ in $\mathcal{G}^j$ is partitioned to the sets $ \mathcal{L}_0^j, \mathcal{L}_1^j, \ldots , \mathcal{L}_{\xi_j}^j $. To this end, let a regular node $i$ update its counter value $c_i(j)$ from 0 to 1 at a time instant $t=k_i^j$. Then, we say that the node $i$ belongs to $\mathcal{L}_m^j$ of $\mathcal{G}^j$ if length of its longest path to a node in $S_j$ be $m$ at $k_i^j$. Apparently, $\mathcal{L}_{\xi_j}^j$ is set of the nodes which have at least a path with maximum length (among all acyclic paths of $\mathcal{G}^j$) to a node in $\mathcal{S}_j$ as $1 \leq m \leq \xi_j$. Accordingly, node $i$ belongs to $\mathcal{L}_0^j$ of $\mathcal{G}^j$ if $i \in \mathcal{S}_j \cap \mathcal{R}$. Now, suppose that the SR-MEDAG construction phase terminates for $\lambda_j \in \sigma_U (A)$. Since all the nodes update their counter values from 0 to 1 at some time instant, it is concluded that $\bigcup_{r=0}^{\xi_j} \mathcal{L}_r^j = \mathcal{R}$. Moreover, a regular node in $\mathcal{R}$ cannot update its counter value at two different time steps (the converse contradicts the rules of Algorithm~\ref{alg1}). Thus $\mathcal{L}_r^j \cap \mathcal{L}_s^j = \emptyset, \ \forall r \neq s$. This completes the proof according to the definition of the sets $\mathcal{L}_m^j \ (0 \leq m \leq \xi_j)$.
\end{proof}

\begin{remark} \rm
Since the network communications are asynchronous and because each regular node does not know the communication delays between other nodes, regular nodes in the sets $\mathcal{L}_0^j, \mathcal{L}_1^j, \ldots ,\mathcal{L}_{\xi_j}^j$ do not update their counter values in the same order as their layer number.
\end{remark}

We intentionally used the minimum number of variables to be communicated in SR-MEDAG so as to avoid potential masquerading threats caused by those variables. For example, it is not possible for regular nodes to realize their layer order in $\mathcal{G}^j$ as they cannot identify which of their parent nodes are spoofed. To clarify this, consider a regular node $i$ in $\mathcal{L}_m^j$. The regular node has to receive $2(\beta+1)f+1$ incoming edges from the nodes in $\bigcup_{r=0}^{m-1} \mathcal{L}_r^j$ which broadcast their layer numbers so that the node $i$ can realize its own layer number by sorting the received values and selecting the maximum as the previous layer number. However, smart spoofers can impersonate some of these nodes and send a wrong layer number behind of them. Thus, the node $i$ can be deceived about the maximum layer number it received. Interestingly, in our method, there is no need that the regular nodes know their layer orders since they only needs to know $2(\beta+1)f+1$ of their parent nodes to succeed in the estimation phase. Therefore, the construction algorithm can still be executed distributedly. Moreover, our strategy succeeds even if some of the source nodes in $\mathcal{S}_j$ are smart spoofers.

\begin{figure}  
	\def \svgscale{.6}
	\hspace{.5cm}
	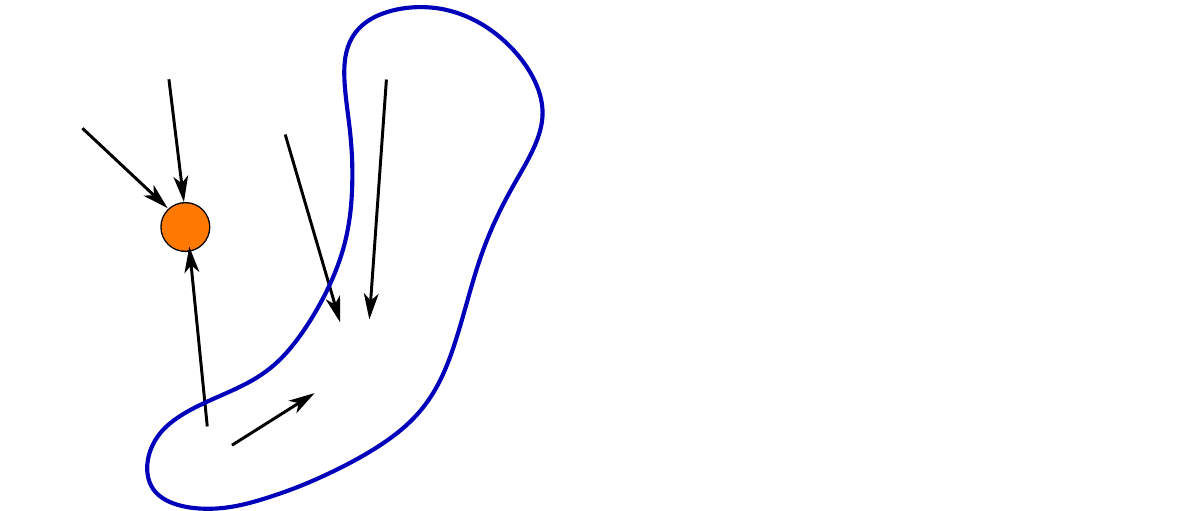
    \caption{Motifs found in SR-MEDAG of $\mathcal{G}^j$ for $f=1$ and $\beta =1$ with the regular node $i$ in $\mathcal{L}_2^j$, three regular parent nodes of $i$ in $\mathcal{L}_0^j$ ($p_1$ and $p_2$ are independent nodes of the motifs and $q$ is a common node) and a parent node in $\mathcal{L}_1^j$ which can be impersonated by a smart spoofer. Here, node $h$ is impersonated by the smart spoofer $s$ for the node $i$.} \label{fig.motif} 
\end{figure} 

\subsection{Analysis of the Resilient Distributed Estimation Strategy}

In this subsection, we introduce a repeating pattern sub-graph which is used to simplify the analysis of our distributed estimation scheme. These sub-graphs are constructed and organized for each regular node according to its incoming edges from smart spoofers and other regular neighbors. Note that they may have overlaps in specific nodes and are defined as follows.  

\begin{definition} \rm \label{df.motif}
(Motifs) Consider a regular node $i \in \mathcal{L}_m^j$ at time instant $k$ and $\lambda_j \in \sigma_U(A)$. Partition $\mathcal{N}_i^j$ into subsets $\{ q \}$, $\{ p_r \}$, $r=\{1,2,\ldots, \underline{r} \}$, and $\{ h_l \}$, $l=\{1,2,\ldots, \bar{l} \}$, where $q$ and $p_r$ are the parent nodes of node $i$ that are not impersonated and $h_l$ is a smart spoofer in $\mathcal{N}_i^j$ or an impersonated parent node of   the node $i$. Then, $\mathcal{G}_{i}^j(l,r) = \big(\mathcal{V}_{i}^j(l,r),\mathcal{E}_i^j(l,r) \big)$ is a sub-graph of $ \mathcal{G}^j$ indicating the motif associated with $h_l$ and $p_r$ around the node $i$, where $\mathcal{V}_{i}^j(l,r) = \{ i,p_r,q,h_l \}$ and $\mathcal{E}_i^j(l,r) = \{ (p_r,i),(q,i),(h_l,i) \}$.
\end{definition}

We aim to associate each motif $\mathcal{G}_{i}^j$ to each node $i$ to ensure that a smart spoofer or an impersonated node cannot deviate the estimation of the node $i$. In fact, each motif is the smallest sub-graph of $\mathcal{G}^j$ which is resilient against Byzantine attacks. Note that impersonated regular nodes are potential Byzantine adversaries since smart spoofers can use their identities to send arbitrary values to their neighbors.

\begin{definition} \rm \label{df.incom}
(Independent and common nodes) Consider $\bar{\gamma}$ motifs associated with $h_l$ and $p_r$ around the regular node $i$ denoted by $\mathcal{G}_{i}^j(l,r)$, $r = 1,2, \ldots ,\bar{\gamma}$. Let $p_{\gamma}$ be a regular node in the set $\mathcal{V}_j^i(l,r)$, $r=\gamma$. Then, $p_{\gamma}$ is an independent node if $p_{\gamma} \notin \mathcal{V}_j^i(l,r) \ \forall r \neq \gamma$. A node that is not independent is called common.
\end{definition}

The analysis strategy is to find the set of motifs around node $i \in \mathcal{L}_m^j$ such that they have only one common node. The following lemma determines the number of such motifs and investigates the possibility of this strategy (see Fig.~\ref{fig.motif} for an example).

\begin{lemma} \rm \label{lm.motif}
Consider the network $\mathcal{G}$ which contains an SR-MEDAG $\mathcal{G}^j$ for each $\lambda_j \in \sigma_U (A)$. There exist at least $(\beta +1) f$ motifs around each regular node $i \in \mathcal{L}_m^j$, where each motif has at least an independent node. 
\end{lemma}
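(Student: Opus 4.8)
The plan is to derive the lemma as a counting argument resting on two facts already in hand: the degree guarantee $\vert \mathcal{N}_i^j \vert \ge 2(\beta+1)f+1$ from Property~1 of Definition~\ref{df.SR-MEDAG}, and an upper bound on how many in-neighbours of a regular follower node $i \in \mathcal{L}_m^j$ can carry adversarial information. First I would split $\mathcal{N}_i^j$ into \emph{trustworthy} parents (regular parents that are not impersonated) and \emph{corrupted} parents, where the latter are exactly the nodes $h_l$ of Definition~\ref{df.motif}: the genuine smart spoofers sitting in $\mathcal{N}_i^j$ together with those regular parents whose identity is being spoofed. The crux is to show that at most $(\beta+1)f$ parents are corrupted, so that at least $(\beta+1)f+1$ parents are trustworthy; the motifs are then assembled entirely from the trustworthy nodes.

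For the corruption bound I would combine the $f$-local model (Definition~\ref{df.flmod}) with Lemma~\ref{lm.kscap}. The $f$-local assumption caps the number of genuine smart spoofers in $\mathcal{N}_i$, hence in $\mathcal{N}_i^j$, by $f$. By Lemma~\ref{lm.kscap}, within each window of $\bar{k}$ steps a single spoofer can send data on behalf of at most $\beta$ regular nodes, and by Proposition~\ref{prop.nc} the only impersonations that actually reach $i$ are of regular nodes already lying in $\mathcal{N}_i^j$. Thus each spoofer contributes at most one corrupted identity of its own plus $\beta$ impersonated regular parents, i.e. at most $\beta+1$ corrupted parents; summing over the at most $f$ spoofers gives at most $(\beta+1)f$ corrupted nodes in $\mathcal{N}_i^j$. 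Subtracting from $\vert \mathcal{N}_i^j \vert \ge 2(\beta+1)f+1$ leaves at least $(\beta+1)f+1$ trustworthy parents.

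Finally I would build the motifs following the single-shared-node strategy. I would designate one trustworthy parent as the common node $q$, leaving at least $(\beta+1)f$ trustworthy parents to act as the nodes $p_r$. Pairing each such $p_r$ with $q$, with the central node $i$, and with any one of the corrupted nodes $h_l$ yields a motif $\mathcal{G}_i^j(l,r)$ in the sense of Definition~\ref{df.motif}. Since the chosen $p_r$ are pairwise distinct and $q$ is the only shared parent across these motifs, each $p_r$ satisfies $p_r \notin \mathcal{V}_i^j(l,r')$ for every $r' \neq r$ and is therefore independent by Definition~\ref{df.incom}. This exhibits at least $(\beta+1)f$ motifs, each containing an independent node, as claimed.

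The step I expect to be most delicate is the corruption count: the bookkeeping that a single spoofer contributes both its own corrupted transmission and up to $\beta$ impersonated identities without double counting, and that (via Proposition~\ref{prop.nc}) these impersonations are confined to parents already in $\mathcal{N}_i^j$, so that any spoofing targeting nodes outside the in-neighbourhood is irrelevant. I would also dispose of the degenerate case in which fewer than $(\beta+1)f$ parents are in fact corrupted: the same corrupted node may then be reused as the $h_l$ of several motifs, which leaves the conclusion intact because independence is a property of the $p_r$ alone.
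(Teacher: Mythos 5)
Your proposal is correct and follows essentially the same counting argument as the paper's proof: bound the corrupted in-neighbours by $f$ genuine spoofers plus $\beta f$ impersonated parents (via Lemma~\ref{lm.kscap} and the $f$-local model), subtract from the SR-MEDAG degree guarantee $2(\beta+1)f+1$ to obtain at least $(\beta+1)f+1$ trustworthy parents, and then fix one common node $q$ and pair the remaining $(\beta+1)f$ independent parents with the corrupted nodes to form the motifs. Your explicit handling of the case with fewer than $(\beta+1)f$ corrupted nodes (reusing an $h_l$) is a small clarification the paper leaves implicit, but it does not change the substance of the argument.
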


\begin{proof}
For each regular node $i \in \mathcal{L}_m^j$, referring to Definition~\ref{df.motif}, consider partitioning of $\mathcal{N}_i^j$ into subsets $\{ q \}$, $\{ p_r \}$, $r=\{1,2,\ldots, \underline{r} \}$, and $\{ h_l \}$, $l=\{1,2,\ldots, \bar{l} \}$. Based on the first property of SR-MEDAG $\mathcal{G}^j$, we have $ \mathcal{N}_i^j \geq 2(\beta +1)f+1$. Under $f$-local smart spoofer model, there are at most $f$ smart spoofers around the node $i$, i.e. $ \vert \mathcal{N}_i^j \cap \mathcal{R} \vert \geq (2 \beta +1) f +1$. These regular nodes are parent nodes of $i$ based on the second property of SR-MEDAG $\mathcal{G}^j$. According to Lemma~\ref{lm.kscap}, at most $\beta f$ of these parent nodes may be impersonated by the smart spoofers. Thus, at most $(\beta +1)f$ of the nodes in $\mathcal{N}_i^j$ are whether smart spoofers or impersonated parent nodes of the node $i$ which are partitioned as $\{ h_l \}$, $l=\{1,2,\ldots, \bar{l} \}$, i.e. $\bar{l} \leq (\beta +1)f$. We can organize at least $(\beta +1)f+1$ of the remaining parent nodes, which cannot be impersonated, to construct the motifs around the node $i$. Based on Definition~\ref{df.motif} and Definition~\ref{df.incom}, we pick a parent node $q$ as a common node and leave the rest in the set of independent parent nodes $\{p_r\}$, $r=1,2, \ldots , \underline{r}$, i.e. $\underline{r} \geq (\beta +1)f$. Since $\underline{r} \geq \bar{l}$, we can find at least $(\beta +1) f$ motifs around the regular node $i$ such that all of them have one common parent node $q$ and each associated with an independent parent node $p_r$ and a node $h_l$. This completes the proof.
\end{proof}

Fig.~\ref{fig.motif} exhibits two overlapping motifs. In this example, there is a smart spoofer node around the regular node $i$, i.e. $f=1$. Also, it is supposed that $\beta=1$. Thus, according to Lemma~\ref{lm.motif}, two motifs are found around node $i$. Note that smart spoofer $s$ can impersonate node $h$, so a motif has to be constructed with the node $h$ as the adversarial node. Also, node $q$ is selected as the common node while $p_1$ and $p_2$ are the two independent parent nodes of node $i$.

Next, we use the notion of motifs to analyze estimation resilience of the network $\mathcal{G}$ enhanced by the distributed estimation update rule (\ref{eq.consda}) at each node against the adversarial nodes in the presence of communication delays and asynchrony. We start with regular node $i \in \mathcal{L}_m^j$ and generalize our analysis to the whole network afterwards.

\begin{lemma} \rm \label{lm.sandwich}
Consider the network $\mathcal{G}$ which contains an SR-MEDAG $\mathcal{G}^j$ for each $\lambda_j \in \sigma_U(A)$. Suppose that the state estimates of regular parent nodes of node $i \in \mathcal{L}_m^j$, for the state related to $\lambda_j$, converge to $z^j$ asymptotically. Then, the local filtering-based algorithm governed by update rule (\ref{eq.consda}) ensures that $\lim_{k \to \infty} \vert \hat{z}_i^j[k]-z^j[k] \vert =0$ in the presence of communication delays and asynchrony under $f$-local smart spoofer model.
\end{lemma}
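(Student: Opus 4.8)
The plan is to pass to error coordinates and reduce the claim to a resilient (trimmed-mean) consensus contraction that is driven to zero by the vanishing errors of the parents. Define $e_\ell^j[k] = \hat{z}_\ell^j[k] - z^j[k]$ for every regular node $\ell$. The starting observation is that the true mode trajectory is itself consistent with the update: since the surviving weights satisfy $\sum_{\ell} \omega_{i\ell}^j[k] = 1$ and $z^j[k+1] = \lambda_j z^j[k]$, the value $z^j$ reproduces the right-hand side of (\ref{eq.consda}) whenever all arguments equal it. Subtracting this from (\ref{eq.consda}) should express $e_i^j[k+1]$ as $\lambda_j$ times a convex combination of the (delayed) regular-neighbor errors, plus a residual term that records the delay acting on $z^j$ itself. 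First I would isolate that decomposition cleanly.

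The second step is the \emph{safety}/sandwich property that gives the lemma its name. By Definition~\ref{df.SR-MEDAG} node $i$ has $|\mathcal{N}_i^j| \ge 2(\beta+1)f+1$ parents, and by the counting in Lemma~\ref{lm.motif} (which rests on Lemma~\ref{lm.kscap}) at most $(\beta+1)f$ of them are smart spoofers or impersonated, hence behave as Byzantine. Because the update discards the $(\beta+1)f$ largest and $(\beta+1)f$ smallest received values, a pigeonhole argument applies: if a surviving value exceeded the largest regular-neighbor value, then it together with the $(\beta+1)f$ values trimmed from above would furnish $(\beta+1)f+1$ adversarial values, violating the local budget; the symmetric bound holds from below. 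Hence every surviving term, and therefore the weighted sum in (\ref{eq.consda}), is sandwiched between the minimum and maximum of the \emph{regular} neighbors' (delayed) estimates, so it equals a genuine convex combination of regular-parent values.

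Combining the two steps, I would bound $|e_i^j[k+1]|$ by $|\lambda_j|$ times the maximum regular-parent error over the recent window $[\,k-\bar{k}-\bar{\tau},\,k\,]$, which is finite by Assumption~\ref{asum3}, plus the delay residual. Since the hypothesis gives $e_\ell^j[\cdot]\to 0$ for every regular parent $\ell$ and the window length is fixed, the windowed maximum tends to $0$; multiplying by the bounded factor $|\lambda_j|$ keeps the product vanishing, which would drive $e_i^j[k]\to 0$.

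The hard part is the delay residual created by the interaction of the growth factor $|\lambda_j|\ge 1$ with the communication lag. The surviving regular values reference $z^j$ at past instants $k-\widetilde{k}_{i\ell}[k]-\tau_{i\ell}[k]$, so the filtered combination tracks a delayed true state $z^j[k-d]$ rather than $z^j[k]$, and the discrepancy $(\lambda_j^{-d}-1)z^j[k]$ need not be small for an unstable mode. Controlling this is the crux; I would handle it by propagating the bound over successive windows of length $\bar{k}+\bar{\tau}$ and exploiting that the parents' errors decay, so that the lagged reference and the current reference become asymptotically indistinguishable, taking care that the per-step amplification by $|\lambda_j|$ is compensated by the decay of the driving parent errors. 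Making this compensation rigorous, rather than relying on the clean synchronous contraction $|e_i^j[k+1]|\le|\lambda_j|\max_\ell|e_\ell^j[k]|$, is where the real work lies.
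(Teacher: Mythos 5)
Your overall strategy matches the paper's: pass to error coordinates by subtracting $z^j[k+1]=\lambda_j z^j[k]$ and using $\sum_{\ell}\omega_{i\ell}^j[k]=1$, then argue that every value surviving the filter is controlled by regular-parent values that vanish by hypothesis. Your second step, however, replaces the paper's motif machinery with the standard trimming pigeonhole: any surviving value exceeding the largest regular-neighbor value would, together with the $(\beta+1)f$ values trimmed from above, exhibit $(\beta+1)f+1$ adversarial entries, contradicting the budget from Lemma~\ref{lm.kscap}/Lemma~\ref{lm.motif}. That argument is correct and arguably cleaner than the paper's, which instead pairs each potential adversary $h_l$ with an independent parent $p_r$ and a common parent $q$ inside a motif $\mathcal{G}_i^j(l,r)$ and argues $h_l$'s value is either filtered or trapped between $\hat{z}_{p_r}^j$ and $\hat{z}_q^j$; the two routes buy the same conclusion, and yours does not need the common/independent-node bookkeeping.

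The genuine gap is the one you flag yourself and then leave as a plan: the delay residual for $\vert\lambda_j\vert\ge 1$. A surviving regular value references $z^j[k_{i\ell}]$ with $k_{i\ell}=k-\widetilde{k}_{i\ell}[k]-\tau_{i\ell}[k]$, and $z^j[k_{i\ell}]-z^j[k]=z^j[k_{i\ell}]\bigl(1-\lambda_j^{\,k-k_{i\ell}}\bigr)$ grows like $\vert\lambda_j\vert^{k_{i\ell}}$ for an unstable mode, so the windowed-maximum bound you propose does not close on its own; "propagating the bound over successive windows" is exactly the step that needs an actual inequality, and your proposal stops before supplying it. You should also be aware that the paper does not supply it either: its proof defines the asymptotic error directly as $e_\ell^j[k]=\hat{z}_\ell^j[k_{i\ell}]-z^j[k]$ and asserts it vanishes because "$k_{i\ell}=k$ if $k\to\infty$," which conflates bounded delay with zero delay. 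Closing the gap honestly requires either an extra mechanism (e.g., each node rolling a received estimate forward by $\lambda_j^{\,d}$ using a known or stamped delay $d$, after which the residual disappears identically) or a quantitative argument that the parents' errors decay fast enough relative to $\vert\lambda_j\vert^{k}$ to absorb the lag; neither is present in your proposal, so as written it establishes the lemma only for $\vert\lambda_j\vert=1$ or for zero effective delay.
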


\begin{proof}
Based on Lemma~\ref{lm.motif}, node $i$ has at most $(\beta +1)f$ potential threats and at least $(\beta +1)f$ motifs around node $i$ such that they have one common node. Consider the motifs around node $i$: $\mathcal{G}_{i}^j(l,r)$, $r = 1,2, \ldots ,\underline{r}$, $l = 1,2, \ldots ,\bar{l}$, where $\underline{r} \geq (\beta +1)f$ and $\bar{l} \leq (\beta +1)f$. Let $p_r$ be an independent parent node of the node $i$, $q$ the common parent node, and $h_l$ the potential Byzantine adversarial node in $\mathcal{G}_{i}^j(l,r)$, where the state estimation of $p_r$ and $q$ converge to $z^j$ asymptotically for the state related to $\lambda_j$. For simplicity of notations, we define $k_{i\ell} = k - \widetilde{k}_{i \ell}[k] - \tau_{i\ell}[k]$. Note that $\tau_{i\ell}[k]$ and $\widetilde{k}_{i\ell}$ are upper bounded by $\bar{\tau}$ and $\bar{k}$, thus $k_{i\ell} = k$ if $k \to \infty$. Therefore, for each node $i$, we calculate the \textit{asymptotic estimation error} of node $\ell \in \mathcal{N}_i^j$ for $z^j[k]$ by the last transmitted data to the node $i$: $e_\ell^j[k] = \hat{z}_\ell^j[k_{i\ell}]-z^j[k]$, $k \to \infty$. Also, the asymptotic estimation error of the node $i$ for $z^j[k]$ is denoted by $e_i^j[k] = \hat{z}_i^j[k]-z^j[k]$, $k \to \infty$. Then, subtracting $z^j[k+1]$ from both sides of (\ref{eq.consda}) and noting that $z^j[k+1]=\lambda_j z^j[k]$ based on (\ref{eq.map}), we derive Equation~\ref{eq.sandwich} from (\ref{eq.consda}). Equation (\ref{eq.sandwich}) represents that the estimation error of the node $i$ is a linear combination of the estimation errors of its neighbors which are grouped as motifs. Note that $\sum_{\ell \in \mathcal{N}_i^j} \omega_{i\ell}^j[k]=1$. For the un-impersonated parent nodes of node $i$, we have $\lim_{k \to \infty} e_q^j[k]=0$ and $\lim_{k \to \infty} e_{p_r}^j[k]=0$, $r=1,2,\ldots , \underline{r}$. 

Now, consider the motifs $\mathcal{G}_{i}^j(l,r)$, $l=r=1,2,\ldots , \bar{l}$ (note that $\underline{r} \geq \bar{l}$) for the adversarial nodes $h_\ell$, $l=1,2,\ldots , \bar{l}$. In construction of the motifs, we arbitrarily pick the common node $q$; so we suppose that $\hat{z}_{p_r}^j[k] \leq \hat{z}_q^j[k]$, $r=1,2,\ldots , \bar{l}$. The following two cases are possible regarding the estimation values of the nodes in the motif $\mathcal{G}_{i}^j(l,r)$: (i) $\hat{z}_{h_l}^j[k] < \hat{z}_{p_r}^j[k_{ip_r}]$ or $\hat{z}_{h_l}^j[k] > \hat{z}_q^j[k]$, (ii) $\hat{z}_{p_r}^j[k] \leq \hat{z}_{h_l}^j[k] \leq \hat{z}_q^j[k]$. In the former case, according to the local filtering algorithm, values of the node $h_l$ will be removed by setting $\omega_{ih_l}[k]=0$. From the latter case, we infer that $e_{p_r}^j[k] \leq e_{h_l}^j[k] \leq e_q^j[k]$; the asymptotic estimation error of the adversarial node $h_l$ will be trapped by the estimation errors of the parent nodes $p_r$ and $q$ in motif $\mathcal{G}_{i}^j(l,r)$ at time step $k$ and will be sandwiched by them over time as they converge to $0$ asymptotically. Therefor, we conclude that $\lim_{k \to \infty} e_{h_l}^j[k]= 0$. 

\begin{equation} \label{eq.sandwich}
\begin{aligned}
e_i^j[k+ & 1]=\\
& \lambda_j \sum_{\ell \in \mathcal{N}_i^j} \omega_{i\ell}^j[k]\hat{z}_\ell^j[k_{i\ell}] - \lambda_j \bigg( \sum_{\ell \in \mathcal{N}_i^j} \omega_{i\ell}^j[k] \bigg) z^j[k] \\
=&\ \sum_{l=r=1}^{\bar{l}} \bigg( \lambda_j \omega_{ip_r}^j[k]\hat{z}_{p_r}^j[k_{i p_r}] - \lambda_j \omega_{ip_r}^j[k] z^j[k] \\
& + \lambda_j \omega_{ih_l}^j[k]\hat{z}_{h_l}^j[k_{i h_l}] - \lambda_j \omega_{ih_l}^j[k] z^j[k] \bigg) \\
&+ \sum_{r=\bar{l}+1}^{\underline{r}} \bigg( \lambda_j \omega_{ip_r}^j[k]\hat{z}_{p_r}^j[k_{i p_r}] - \lambda_j \omega_{ip_r}^j[k] z^j[k] \bigg) \\
& + \lambda_j \omega_{iq}^j[k]\hat{z}_q^j[k_{iq}] - \lambda_j \omega_{iq}^j[k] z^j[k]\\
=&\ \lambda_j \sum_{l=r=1}^{\bar{l}} \bigg( \omega_{ip_r}^j[k]e_{p_r}^j[k] + \omega_{ih_l}^j[k]e_{h_l}^j[k] \bigg) \\
& + \lambda_j \sum_{r=\bar{l}+1}^{\underline{r}} \omega_{ip_r}^j[k]e_{p_r}^j[k] + \lambda_j \omega_{iq}^j[k]e_q^j[k].
\end{aligned}
\end{equation}

The same argument holds for all adversarial nodes $h_l \in \mathcal{G}_{i}^j(l,r)$, $l=1,2,\ldots , \bar{l}$. Therefore, the estimation error $e_i^j[k+1]$, which is the linear combination of the estimation errors $e_{p_r}^j[k]$, $r=1,2,\ldots , \underline{r}$,  $e_{h_l}^j[k]$, $l=1,2,\ldots , \bar{l}$ and $e_q^j[k]$, converges to $0$ asymptotically, i.e. $\lim_{k \to \infty} e_i^j[k] = \lim_{k \to \infty} \vert \hat{z}_i^j[k]-z^j[k] \vert =0, \forall i \in \mathcal{L}_m^j$. 
\end{proof}

Now, we analyze resilience of the estimation of all the follower nodes in the whole network $\mathcal{G}$ with communication delays and asynchrony against $f$-local smart spoofer model.

\begin{lemma} \rm \label{lm.netest}
Consider the network $\mathcal{G}$ which contains an SR-MEDAG $\mathcal{G}^j$ for each $\lambda_j \in \sigma_U (A)$. Then, for each regular node $i \in \mathcal{R}$ and each $\lambda_j \in \mathcal{U}_i$, the local filtering-based algorithm governed by update rule (\ref{eq.consda}) ensures that $\lim_{k \to \infty} \vert \hat{z}_i^j[k]-z^j[k] \vert =0$ in the presence of communication delays and asynchrony under $f$-local smart spoofer model.
\end{lemma}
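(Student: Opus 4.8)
The plan is to bootstrap the conditional convergence result of Lemma~\ref{lm.sandwich} across the whole network by an induction on the layer index of the SR-MEDAG. Fix an eigenvalue $\lambda_j \in \sigma_U(A)$ and the associated SR-MEDAG $\mathcal{G}^j$, whose regular nodes are partitioned by Definition~\ref{df.SR-MEDAG} into layers $\mathcal{L}_0^j, \mathcal{L}_1^j, \ldots, \mathcal{L}_{\xi_j}^j$ with $\mathcal{L}_0^j = \mathcal{S}_j \cap \mathcal{R}$ and the crucial acyclic property $\mathcal{N}_i^j \cap \mathcal{R} \subseteq \bigcup_{r=0}^{m-1} \mathcal{L}_r^j$ whenever $i \in \mathcal{L}_m^j$. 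The claim I want to establish is that every regular node in every layer of $\mathcal{G}^j$ has its estimate of $z^j$ converge asymptotically; since these layers exhaust $\mathcal{R}$, this yields the lemma for the fixed $\lambda_j$, and repeating the argument for each undetectable eigenvalue completes the proof.

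For the base case $m = 0$, a node $i \in \mathcal{L}_0^j = \mathcal{S}_j \cap \mathcal{R}$ is a regular source node for $\lambda_j$, so $\lambda_j$ is detectable by $i$. Such a node does not use the consensus rule (\ref{eq.consda}) for this mode but instead runs the local Luenberger observer (\ref{eq.lu}); under Assumption~\ref{asum4} the associated error dynamics are Schur stable, so $\lim_{k\to\infty} \vert \hat{z}_i^j[k] - z^j[k] \vert = 0$. This handles the root layer independently of any neighbour information, and in particular is immune to delays, asynchrony, and spoofing since it relies solely on node $i$'s own measurements.

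For the inductive step, assume that every regular node in $\mathcal{L}_0^j \cup \cdots \cup \mathcal{L}_{m-1}^j$ estimates $z^j$ asymptotically correctly, and take any $i \in \mathcal{L}_m^j$. By the acyclic layering property, the regular parents of $i$, namely $\mathcal{N}_i^j \cap \mathcal{R}$, all lie in the union of the earlier layers, so the induction hypothesis guarantees exactly the premise required by Lemma~\ref{lm.sandwich}: the estimates of all regular parent nodes of $i$ converge to $z^j$. Applying Lemma~\ref{lm.sandwich} then gives $\lim_{k\to\infty} \vert \hat{z}_i^j[k] - z^j[k] \vert = 0$ for node $i$, even in the presence of bounded delays, asynchrony, and the $f$-local smart spoofers, whose impersonated or Byzantine values are absorbed by the sandwiching motifs. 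Since $\mathcal{R}$ is finite and $\bigcup_{r=0}^{\xi_j} \mathcal{L}_r^j = \mathcal{R}$, the induction terminates after $\xi_j$ steps and covers every regular follower node with $\lambda_j \in \mathcal{U}_i$.

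I expect the only delicate point to be the correct feeding of the induction hypothesis into the premise of Lemma~\ref{lm.sandwich}: one must verify that the hypothesis of that lemma concerns precisely the regular parents $\mathcal{N}_i^j \cap \mathcal{R}$ and that the smart spoofers and impersonated nodes among $\mathcal{N}_i^j$ are never needed for convergence, a fact already secured by the motif construction of Lemma~\ref{lm.motif}. The layering property of the SR-MEDAG is what rules out any circular dependence among the regular estimates, so the induction is well founded; beyond that, the argument is a routine outward propagation of convergence from the source layer $\mathcal{L}_0^j$ through the DAG.
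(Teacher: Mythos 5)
Your proposal is correct and follows essentially the same route as the paper: an induction on the SR-MEDAG layer index, with the base case $\mathcal{L}_0^j$ handled by the local Luenberger observers of the regular source nodes and the inductive step obtained by feeding the convergence of the regular parents (which lie in earlier layers by the layering property) into the premise of Lemma~\ref{lm.sandwich}. No gaps; if anything, your explicit remark that the induction hypothesis supplies exactly the hypothesis of Lemma~\ref{lm.sandwich} for the set $\mathcal{N}_i^j \cap \mathcal{R}$ is slightly more careful than the paper's own wording.
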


\begin{proof}
As $\mathcal{G}$ contains an SR-MEDAG for each $\lambda_j \in \sigma_U (A)$, the sets $ \mathcal{L}_0^j, \mathcal{L}_1^j, \ldots  , \mathcal{L}_{\xi_j}^j $ form a partition of the set $\mathcal{R}$. To prove, we use induction on the layer number $m$. 

For $m=0$, by definition of the set $\mathcal{L}_0^j$, all the regular nodes in $\mathcal{L}_0^j$ belong to the set $\mathcal{S}_j$ and can estimate $z^j[k]$ asymptotically. Then, consider the regular node $i$ belonging to the set $\mathcal{L}_1^j$. Suppose that the regular node $i$ has $f$ incoming edges from adversarial nodes. Then, according to Lemma \ref{lm.motif}, we can find at least $(\beta +1)f$ motifs around node $i$ with each of them having an independent regular source node from the set $\mathcal{L}_0^j$. Each smart spoofer is able to impersonate at most $\beta f$ parent nodes of $i$. So, there are at most $(\beta +1)f$ Byzantine adversarial nodes (including smart spoofers and impersonated regular nodes) around node $i$. However, we infer from Lemma \ref{lm.sandwich} that the state estimate value of each adversarial node is trapped and sandwiched by one of the motifs according to the local filtering algorithm. Thus, the state estimate of node $i$ converges to $z^j$ asymptotically.  

Next, suppose the result holds for the regular nodes of all layers from $0$ to $m$ (where $1 \leq m \leq \xi_j - 1$). By induction, it is concluded that the result holds for all the regular nodes in $\mathcal{L}_{m+1}^j$ as well based on the definition of SR-MEDAG.
\end{proof}

Due to the linear dynamics of the local Luenberger observers for source nodes and since the estimation error of each follower node is a linear combination of its un-impersonated parents, we infer the following corollary about the convergence rate of the follower nodes' estimation error.
\begin{corollary} \rm
Estimation convergence rate of all the follower nodes in the network is exponential as the estimation error of the source nodes converges to $0$ exponentially.
\end{corollary}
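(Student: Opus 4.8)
The plan is to establish the exponential convergence rate by tracing the linear structure of the error dynamics layer by layer through the SR-MEDAG, reusing the induction of Lemma~\ref{lm.netest} but now tracking the \emph{rate} rather than mere convergence. First I would anchor the base case: each source node $i \in \mathcal{S}_j = \mathcal{L}_0^j$ runs the local Luenberger observer~(\ref{eq.lu}), whose error obeys $e[k+1] = (\Lambda_i - L_i \bar{C}_{\mathcal{D}_i})e[k]$. Since $(\Lambda_i - L_i \bar{C}_{\mathcal{D}_i})$ is Schur stable by construction, its spectral radius $\rho_0 < 1$, so $\vert e_i^j[k] \vert \leq M_0 \rho_0^{\,k}$ for some constant $M_0$; this is exponential convergence with an explicit rate, which seeds the induction.

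Next I would carry the rate through the inductive step. The key observation is that equation~(\ref{eq.sandwich}) writes the follower-node error $e_i^j[k+1]$ as $\lambda_j$ times a \emph{convex} combination of the errors $e_{p_r}^j$, $e_q^j$ of the un-impersonated parents together with the filtered adversarial terms $e_{h_l}^j$. For the honest parents the inductive hypothesis supplies a bound $\vert e_{p_r}^j[k]\vert, \vert e_q^j[k]\vert \leq M_m \rho_m^{\,k}$. For each adversarial term $h_l$, the sandwich argument of Lemma~\ref{lm.sandwich} gives $e_{p_r}^j[k] \leq e_{h_l}^j[k] \leq e_q^j[k]$ whenever the value survives filtering, so $\vert e_{h_l}^j[k]\vert \leq \max\{\vert e_{p_r}^j[k]\vert, \vert e_q^j[k]\vert\} \leq M_m \rho_m^{\,k}$; when it is filtered out, its weight is zero and it contributes nothing. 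Because the weights are non-negative and sum to one, I would then bound
\begin{equation} \label{eq.rateineq}
\vert e_i^j[k+1]\vert \;\leq\; \vert \lambda_j\vert \sum_{\ell \in \mathcal{N}_i^j} \omega_{i\ell}^j[k]\,M_m \rho_m^{\,k} \;=\; \vert\lambda_j\vert\, M_m \rho_m^{\,k},
\end{equation}
which again decays geometrically, establishing exponential convergence for every node in $\mathcal{L}_{m+1}^j$ and closing the induction over all layers $0 \leq m \leq \xi_j$.

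The main obstacle is the delay-and-asynchrony bookkeeping: the parent errors entering~(\ref{eq.sandwich}) are evaluated at the delayed index $k_{i\ell} = k - \widetilde{k}_{i\ell}[k] - \tau_{i\ell}[k]$ rather than at $k$, so a naive substitution of the bound $M_m\rho_m^{\,k}$ is not immediate. I would handle this by using Assumption~\ref{asum3}, which caps the total lag by $\bar{k} + \bar{\tau}$; hence $\vert e_\ell^j[k_{i\ell}]\vert \leq M_m \rho_m^{\,k_{i\ell}} \leq M_m \rho_m^{\,k-\bar{k}-\bar{\tau}} = (M_m \rho_m^{-\bar{k}-\bar{\tau}})\rho_m^{\,k}$, which merely inflates the constant and leaves the geometric rate $\rho_m$ intact. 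A secondary subtlety is that the factor $\vert\lambda_j\vert$ may exceed one for unstable modes, so the crude inequality~(\ref{eq.rateineq}) alone does not force decay; the resolution is that the product $\lambda_j \omega_{i\ell}^j[k]$ acts on the \emph{error}, and the Schur-stable source dynamics together with the consensus contraction inherited from the motif structure guarantee an effective rate $\rho_{m+1} = \max\{\rho_m, \rho_0\} < 1$ that absorbs $\lambda_j$, exactly as in the convergence proof of Lemma~\ref{lm.sandwich}. Taking $\rho = \max_{0\le m\le \xi_j}\rho_m < 1$ over all finitely many layers then yields a single exponential rate valid for every follower node, which is the assertion of the corollary.
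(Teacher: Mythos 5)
Your argument is essentially a fleshed-out version of the paper's own justification, which is a single sentence: the source-node error decays exponentially because the Luenberger error matrix is Schur stable, and each follower's error is a linear (convex, scaled by $\lambda_j$) combination of its un-impersonated parents' errors, so the rate propagates down the layers. Your base case, your handling of the delayed indices (a lag bounded by $\bar{k}+\bar{\tau}$ only inflates the constant by $\rho_m^{-\bar{k}-\bar{\tau}}$), and your use of the sandwich bound $\vert e_{h_l}^j\vert \leq \max\{\vert e_{p_r}^j\vert,\vert e_q^j\vert\}$ for the surviving adversarial terms are all correct and are exactly what the paper leaves implicit.

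The one place you go astray is the final paragraph. The worry that ``$\vert\lambda_j\vert$ may exceed one, so the crude inequality~(\ref{eq.rateineq}) alone does not force decay'' is misplaced, and the cure you offer --- a ``consensus contraction inherited from the motif structure,'' attributed to the proof of Lemma~\ref{lm.sandwich} --- does not exist; that proof establishes no contraction, only that a convex combination of vanishing terms vanishes. The correct reason the factor $\vert\lambda_j\vert$ is harmless is the acyclicity of the SR-MEDAG: the right-hand side of~(\ref{eq.sandwich}) never contains $e_i^j$ itself, only errors of nodes in strictly lower layers, so the recursion is feedforward rather than self-referential. Consequently your own bound $\vert e_i^j[k+1]\vert \leq \vert\lambda_j\vert M_m \rho_m^{\,k} = (\vert\lambda_j\vert M_m/\rho_m)\,\rho_m^{\,k+1}$ already finishes the inductive step with $\rho_{m+1}=\rho_m$ and $M_{m+1}=\vert\lambda_j\vert M_m \rho_m^{-1-\bar{k}-\bar{\tau}}$; the factor $\vert\lambda_j\vert$ enters the constant once per layer, i.e.\ at most $\xi_j$ times in total, not once per time step, so no contraction is needed to absorb it. (If you also account for the steps between updates, where the estimate is propagated open-loop and the error is multiplied by $\lambda_j$ for at most $\bar{k}-1$ consecutive steps, this again only contributes a bounded factor $\vert\lambda_j\vert^{\bar{k}-1}$ to the constant.) With that substitution your proof is complete and matches the paper's intent.
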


\begin{theorem} \rm \label{th.est}
Consider the network $\mathcal{G}$ which contains an SR-MEDAG for each $\lambda_j \in \sigma_U(A)$. Then, the distributed estimation scheme governed by the Luenberger observers described by (\ref{eq.lu}), and the local filtering-based algorithm governed by update rule (\ref{eq.consda}), achieves resilient omniscience in the presence of communication delays and asynchrony under $f$-local smart spoofer model.
\end{theorem}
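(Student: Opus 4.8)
The plan is to reduce the claim about the full state estimate in the original coordinates to a collection of scalar, per-mode convergence statements in the transformed coordinates, and then to invoke the two convergence results already established for the two roles a node can play. Since $z[k]=\Psi^{-1}x[k]$ with $\Psi$ a fixed invertible matrix, we have $\hat{x}_i[k]-x[k]=\Psi(\hat{z}_i[k]-z[k])$, so for a regular node $i$ the condition $\lim_{k\to\infty}\vert\hat{x}_i[k]-x[k]\vert=0$ holds if and only if $\lim_{k\to\infty}\vert\hat{z}_i^j[k]-z^j[k]\vert=0$ for every component $j$. Because $\bar{A}=\Psi^{-1}A\Psi$ is diagonal under Assumption~\ref{asum4}, the modes $z^j$ evolve independently, so it suffices to prove convergence of each scalar estimate $\hat{z}_i^j$ separately and then stack the results back through $\Psi$.

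First I would fix a regular node $i\in\mathcal{R}$ and partition its eigenvalue index set into the detectable set $\mathcal{D}_i$ and the undetectable set $\mathcal{U}_i$. For every $\lambda_j\in\mathcal{D}_i$ the node runs the local Luenberger observer (\ref{eq.lu}); since $(\Lambda_i,\bar{C}_{\mathcal{D}_i})$ is detectable, the gain $L_i$ renders $(\Lambda_i-L_i\bar{C}_{\mathcal{D}_i})$ Schur, so the error $\hat{z}_i^j[k]-z^j[k]$ decays to zero exponentially, independently of the network and of any adversarial action. For every $\lambda_j\in\mathcal{U}_i\cap\sigma_U(A)$ the node acts as a follower and updates through (\ref{eq.consda}); here the hypothesis that $\mathcal{G}$ contains an SR-MEDAG for each $\lambda_j\in\sigma_U(A)$ lets me apply Lemma~\ref{lm.netest} verbatim, which already accounts for the $f$-local smart spoofer model, the communication delays, and the asynchrony, and yields $\lim_{k\to\infty}\vert\hat{z}_i^j[k]-z^j[k]\vert=0$. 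The remaining case is an undetectable but stable eigenvalue, $\lambda_j\in\mathcal{U}_i$ with $\vert\lambda_j\vert<1$, for which both $z^j[k]$ and the estimate decay under the contractive mode dynamics, so its error vanishes as well.

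Having covered every eigenvalue--node pair, I would conclude by collecting the scalar limits: for each fixed $i\in\mathcal{R}$, every coordinate of $\hat{z}_i[k]-z[k]$ tends to zero, hence $\hat{z}_i[k]\to z[k]$ and, mapping back through the fixed matrix $\Psi$, $\hat{x}_i[k]\to x[k]$; since $i\in\mathcal{R}$ was arbitrary, this is precisely resilient omniscience in the sense of Definition~\ref{df.rsomn}. The heavy lifting lives in the cited lemmas rather than in this theorem: the genuinely hard work---sandwiching each spoofed or impersonated neighbor's error between two honest parents inside a motif, and the layer-by-layer induction across the SR-MEDAG---is already discharged in Lemma~\ref{lm.motif}, Lemma~\ref{lm.sandwich}, and Lemma~\ref{lm.netest}. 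The main point requiring care here is therefore the bookkeeping: verifying that the split $\mathcal{D}_i\cup\mathcal{U}_i$ together with the assumed existence of SR-MEDAGs for all marginally stable and unstable modes leaves no eigenvalue--node pair unaccounted for, and confirming that the fixed invertible transformation $\Psi$ preserves the componentwise asymptotic convergence when it is recombined into the original coordinates.
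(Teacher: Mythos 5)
Your proof is correct and follows essentially the same route as the paper's: split each node's modes into $\mathcal{D}_i$ (handled by the Schur-stable Luenberger error dynamics of (\ref{eq.lu})) and $\mathcal{U}_i$ (handled by Lemma~\ref{lm.netest} via the assumed SR-MEDAGs), then recombine through the fixed invertible $\Psi$. The only addition is your explicit treatment of undetectable but strictly stable modes, which the paper leaves implicit; that is a harmless and slightly more careful piece of bookkeeping, not a different argument.
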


\begin{proof}
Based on the observable canonical decomposition represented by (\ref{eq.map}), for each regular node $i$, states of the dynamics system (\ref{eq.dyn}) are mapped and partitioned into two sub-states $z_{\mathcal{D}_i}[k]$ and $z_{\mathcal{U}_i}[k]$ corresponding to the detectable and undetectable eigenvalues of the node $i$, respectively. Using the designed Luenberger observers, $\hat{z}_{D_i}[k]$ converges to $z_{D_i}[k]$ asymptotically. As an SR-MEDAG exists for each $\lambda_j \in \sigma_U(A)$, the result of Lemma~\ref{lm.netest} also holds. Consequently, node $i$ is able to estimate $z_{\mathcal{U}_i}[k]$ asymptotically even in the presence of communication delays, asynchrony and adversarial actions of smart spoofers. Combining these results, we infer that node $i$ can estimate the entire state $z[k]$ which leads to resiliently observing $x[k]$ using the transformation $x[k]=\Psi z[k]$. This completes the proof.
\end{proof}

\subsection{Spoof-Resilient Graph Topologies}
In this subsection, we characterize graph topologies  which ensures termination of the SR-MEDAG construction phase for each $\lambda_j \in \sigma_U (A)$ under misbehavior of smart spoofers.

\begin{lemma} \rm \label{lm.gterm}
The SR-MEDAG construction phase terminates for $\lambda_j \in \sigma_U (A)$ if $\mathcal{G}$ is strongly $ \big( 3(\beta +1)f+1 \big)$ -robust w.r.t. $\mathcal{S}_j$.
\end{lemma}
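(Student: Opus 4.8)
The plan is to argue by contradiction, applying the strong robustness hypothesis directly to the set of regular nodes that never activate. Define $\mathcal{C}$ to be the set of regular nodes $i$ whose counter $c_i(j)$ never becomes $1$ during the run of Algorithm~\ref{alg1}, and suppose toward a contradiction that $\mathcal{C} \neq \emptyset$. Since every regular source node $i \in \mathcal{S}_j \cap \mathcal{R}$ sets $c_i(j)=1$ at the first step, no such node lies in $\mathcal{C}$; hence $\mathcal{C} \subseteq \mathcal{R} \setminus \mathcal{S}_j \subseteq \mathcal{V}\setminus \mathcal{S}_j$, so $\mathcal{C}$ is a legitimate candidate set for the strong robustness property with respect to $\mathcal{S}_j$.

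Applying Definition~\ref{df.robust} with $r = 3(\beta+1)f+1$, the set $\mathcal{C}$ is $r$-reachable, so there is a node $i \in \mathcal{C}$ with $\vert \mathcal{N}_i \setminus \mathcal{C} \vert \geq 3(\beta+1)f+1$. I would then classify the neighbours in $\mathcal{N}_i \setminus \mathcal{C}$: each is either a smart spoofer or a regular node that does activate (and hence, once activated, keeps broadcasting the true message $\chi_j$). Under the $f$-local model at most $f$ of them are smart spoofers, leaving at least $3(\beta+1)f+1-f = (3\beta+2)f+1$ regular activated neighbours that genuinely broadcast $\chi_j$ toward $i$.

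The crux is to count how many of those genuine broadcasts node $i$ actually retains. By Lemma~\ref{lm.kscap}, within any window of $\bar{k}$ steps the at most $f$ spoofers can together impersonate at most $\beta f$ of these regular neighbours, overwriting their $\chi_j$ in $i$'s per-identity memory (this is exactly the suppression mechanism of Proposition~\ref{prop.nc}, using that $i$ keeps only the last packet per identity by Assumption~\ref{asum2}). Hence at every update of $i$ at least $(3\beta+2)f+1-\beta f = 2(\beta+1)f+1$ distinct neighbours have $\chi_j$ as their last received message. The robustness level is tuned precisely so that this residual equals the activation threshold of Algorithm~\ref{alg1}. Therefore node $i$ must set $c_i(j)=1$, contradicting $i \in \mathcal{C}$; so $\mathcal{C}=\emptyset$ and every regular node activates, i.e. the construction phase terminates for $\lambda_j$.

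I expect the main obstacle to be the asynchronous timing of the last step: I must ensure that the $2(\beta+1)f+1$ genuine messages are \emph{simultaneously} present in $i$'s memory, rather than merely delivered at scattered times, even as the spoofers rotate which $\beta f$ identities they impersonate. The key is that once a regular neighbour activates it rebroadcasts $\chi_j$ at least once every $\bar{k}$ steps (Assumption~\ref{asum3}) with delay bounded by $\bar{\tau}$, so as soon as the spoofers release an identity its genuine $\chi_j$ is restored in $i$'s memory within bounded time; combined with the per-window impersonation cap $\beta f$ of Lemma~\ref{lm.kscap}, this guarantees that the threshold is met at some update of $i$ occurring after all of its activating neighbours have themselves activated. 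The finiteness of this activation time is also what underlies the boundedness of $\bar{K}_j$ asserted in the surrounding text.
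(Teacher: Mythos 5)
Your proposal is correct and follows essentially the same route as the paper: contradiction on the set $\mathcal{C}$ of never-activating regular nodes, invoking strong $\bigl(3(\beta+1)f+1\bigr)$-robustness to get an $r$-reachable node $i$, then subtracting at most $f$ spoofers and at most $\beta f$ impersonated identities to leave $2(\beta+1)f+1$ genuine $\chi_j$ transmissions, which triggers activation. Your extra discussion of the asynchronous timing and per-identity memory overwrites is a reasonable elaboration of a point the paper glosses over, but it does not change the argument.
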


\begin{proof}
Contradiction is used for the proof. Consider any $\lambda_j \in \sigma_U (A)$ and let $\mathcal{G}$ be strongly $ \big( 3(\beta+1)f+1 \big)$-robust w.r.t. the set of source nodes $\mathcal{S}_j$. If the SR-MEDAG construction phase does not terminate for $\lambda_j$, there exists a set of regular nodes $\mathcal{C} \subseteq \mathcal{V} \setminus \mathcal{S}_j$ which never update their counter values $c_i(j)$ from 0 to 1 for $i \in \mathcal{C}$. As $\mathcal{G}$ is strongly $ \big( 3(\beta+1)f+1 \big)$-robust w.r.t. $\mathcal{S}_j$, it follows that $\mathcal{C}$ is $ \big( 3(\beta+1)f+1 \big)$-reachable, i.e., there exists a node $i \in \mathcal{C}$ which has at least $3(\beta+1)f+1$ neighbors outside $\mathcal{C}$. Under the $f$-local smart spoofer model, at most $f$ of these nodes are smart spoofers which are able either to misbehave themselves or to impersonate $\beta f$ regular nodes during the SR-MEDAG construction phase. So, at least, $ 2(\beta+1)f+1$ of them are regular nodes with $c_i(j) = 1$ which must have transmitted $\chi_j$ to node $i$. Thus, node $i$ must have changed $c_i(j)$ from 0 to 1 at some point of time, according to the rules of Algorithm~\ref{alg1}. This is a contradiction.
\end{proof}

\begin{proposition} \rm \label{prop.Kj1}
Suppose that $\mathcal{G}$ is strongly $ \big( 3(\beta+1)f+1 \big)$-robust w.r.t. $\mathcal{S}_j, \ \forall \lambda_j \in \sigma_U (A)$, and let the SR-MEDAG construction phase starts at $k=0$. Then, $\bar{K}_j$ in Algorithm~\ref{alg1} is upper bounded by $\bar{l}_j \big( (\eta +1) \bar{k} + \bar{\tau}+1 \big)$ where $\eta = \beta \lfloor (\bar{\tau}-\bar{k})/\bar{k} \rfloor$ and $\bar{l}$ is length of the longest path of $\mathcal{G}^j$.
\end{proposition}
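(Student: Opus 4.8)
The plan is to bound the termination time of Algorithm~\ref{alg1} by tracking the time at which the counters $c_i(j)$ of the regular nodes switch from $0$ to $1$, layer by layer, using the partition $\mathcal{L}_0^j,\ldots,\mathcal{L}_{\xi_j}^j$ guaranteed by Theorem~\ref{th.SRMEDAG} and ordered by longest path to $\mathcal{S}_j$. Since robustness already yields termination (Lemma~\ref{lm.gterm}), all that remains is a quantitative rate. I would define $T_m$ as the first time by which every regular node whose longest path to $\mathcal{S}_j$ is at most $m$ has set its counter to $1$, with $T_0=0$ because the source nodes begin broadcasting $\chi_j$ at $k=0$. The proof then reduces to a single per-layer estimate: if all parents of a node $i\in\mathcal{L}_m^j$ are active by time $T_{m-1}$ (they lie in $\bigcup_{r<m}\mathcal{L}_r^j$ by Definition~\ref{df.SR-MEDAG}), then $i$ activates no later than $T_{m-1}+\Delta$ with $\Delta=(\eta+1)\bar k+\bar\tau+1$. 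Because the longest path of $\mathcal{G}^j$ has length $\bar l_j$, there are at most $\bar l_j$ such inductive steps beyond layer $0$, giving $\bar K_j\le \bar l_j\Delta$ and the stated bound.

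For the per-layer estimate I would first establish that, once its parents are active, node $i$ is eventually fed enough \emph{untampered} copies of $\chi_j$. Reusing the counting in the proof of Lemma~\ref{lm.gterm}, strong $\big(3(\beta+1)f+1\big)$-robustness w.r.t.\ $\mathcal{S}_j$ ensures that $i$ has at least $3(\beta+1)f+1$ already-activated neighbours; discarding the at most $f$ spoofers in $\mathcal{N}_i$ and the at most $\beta f$ parents a smart spoofer can overwrite at a given instant (Lemma~\ref{lm.kscap}) leaves exactly $2(\beta+1)f+1$ regular parents whose last packet is the genuine $\chi_j$. Each such parent, once active, rebroadcasts $\chi_j$ at least once per $\bar k$ steps by Assumption~\ref{asum3}, and the packet reaches $i$ within the delay bound $\bar\tau$; this accounts for the $\bar k+\bar\tau$ part of $\Delta$ needed for the first genuine copies to be present at $i$. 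Under Assumption~\ref{asum2} the counter condition of Algorithm~\ref{alg1} is evaluated on the \emph{last} packet held for each neighbour, so the whole difficulty is to show that the spoofers cannot keep the number of neighbour slots currently showing $\chi_j$ below the threshold $2(\beta+1)f+1$ for longer than $\eta$ additional update rounds of $i$.

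The step I expect to be the crux is precisely the quantification of $\eta$. The subtlety is that, although a smart spoofer overwrites at most $\beta f$ identities \emph{per} $\bar k$-window (Lemma~\ref{lm.kscap}), the delay budget $\bar\tau$ lets it front-load stale impersonations sent in earlier windows, so that several rounds' worth of spoofed packets are simultaneously the ``last'' packets held by $i$. A packet that is the last one at time $t$ must have been sent in $[t-\bar\tau,t]$, a span meeting at most $\lfloor(\bar\tau-\bar k)/\bar k\rfloor$ windows strictly earlier than the current one (the current window being already counted in the nominal $\beta f$), so the transient excess of blocked regular parents is bounded by $\eta=\beta\lfloor(\bar\tau-\bar k)/\bar k\rfloor$. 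I would then argue that this excess can only be \emph{transient}: since every regular parent relentlessly re-sends $\chi_j$ every $\bar k$ steps while the spoofer can refresh a given stale impersonation only at its own bounded rate, each stacked spoofed packet is overwritten by a genuine one within the delay window, draining the excess over at most $\eta$ rounds of length $\bar k$. After this drain the instantaneous blocking is back to at most $\beta f$, the count reaches $2(\beta+1)f+1$, and $i$ sets $c_i(j)=1$. Collecting the first-arrival term $\bar k+\bar\tau$, the drain term $\eta\bar k$, and one extra step $+1$ for the discrete update that records the threshold crossing yields $\Delta=(\eta+1)\bar k+\bar\tau+1$, and multiplying by the $\bar l_j$ layers completes the argument. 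The delicate bookkeeping---formalizing ``last packet held,'' matching send and receive times against the $\bar k$- and $\bar\tau$-bounds, and proving the drain terminates in $\eta$ rounds rather than persisting indefinitely---is where the real work lies.
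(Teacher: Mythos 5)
Your proposal is correct and follows essentially the same route as the paper's proof: a per-hop bound of $\Delta=(\eta+1)\bar k+\bar\tau+1$ obtained from the worst-case update postponement and delay of the $2(\beta+1)f+1$ safe parents plus the extra spoofing capacity $\eta=\beta\lfloor(\bar\tau-\bar k)/\bar k\rfloor$ that the delay budget affords, multiplied by the longest-path length $\bar l_j$. Your layer-indexed induction via $T_m$ and the explicit "drain" bookkeeping are just a slightly more formal packaging of the paper's argument along the longest path of $\mathcal{G}^j$.
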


\begin{proof}
Since $\mathcal{G}$ is strongly $ \big( 3(\beta+1)f+1 \big)$-robust w.r.t. $\mathcal{S}_j$, according to Lemma~\ref{lm.gterm}, each regular node has at least $2(\beta+1)f+1$ parent nodes which remain safe from spoofing and transmit $\chi_j$ at least once to the regular node. Each of these parent nodes has to make at least an update within $\bar{k}$ consecutive steps. 

We consider two separate cases: i) $\bar{\tau} \leq \bar{k}$ and ii) $\bar{\tau}>\bar{k}$. Let node $i$ be in $\mathcal{L}_1^j$ of $\mathcal{G}^j$. In the first case, each smart spoofer $s \in \mathcal{N}_i^j$ is able to impersonate at most $\beta f$ parent nodes of node $i$ in each consecutive $\bar{k}$ steps. So, the other $2(\beta+1)f+1$ parent nodes will remain safe and transmit $\chi_j$ to node $i$. For the second case, we consider the worst case where all these parent nodes postpone their updates by $\bar{k}-1$ steps and communicate to their neighbors with $\bar{\tau}$ steps delay (because we are seeking the maximum time steps that a smart spoofer can prevent exactly $2(\beta+1)f+1$ parent nodes to transmit $\chi_j$ to their neighbors). As $\bar{\tau}>\bar{k}$, the smart spoofer has more $\eta = \beta \lfloor (\bar{\tau}-\bar{k})/\bar{k} \rfloor$ capacity after the first update to impersonate more parent nodes. However, they cannot do it permanently, i.e. at some time the parent nodes will transmit $\chi_j$ to node $i$. Suppose that the smart spoofer use this additional capacity to impersonate just one additional parent node. In fact, $\eta$ updates of this parent node will be spoofed which takes $\eta \bar{k}$ time steps. Considering the first $\bar{k}-1$ steps that the parent node may postpone its first update and $\bar{\tau}$ steps delay of its last update, the overall time that the spoofed parent node succeeds to transmit $\chi_j$ to node $i$ will be $(\eta +1) \bar{k} + \bar{\tau} +1$ steps. 

Now, consider the last node in the longest path of $\mathcal{G}^j$ which is the last node that updates its counter value from 0 to 1. Let the length of the longest path of $\mathcal{G}^j$ be $\bar{l}_j$ and each node in this path updates its counter value from 0 to 1 after at most $(\eta +1) \bar{k} + \bar{\tau} +1$. Then the maximum time needed for each regular node to keep transmitting $\chi_j$ is bounded by $\bar{l}_j \big( (\eta +1) \bar{k} + \bar{\tau}+1 \big)$.
\end{proof}

\begin{remark} \rm
The maximum time that is needed for each regular node $i \in \mathcal{L}_m^j$ to keep transmitting $\chi_j$ is bounded by $(\bar{l}_j-m) \big( (\eta +1) \bar{k} + \bar{\tau}+1 \big)$. However, as the regular nodes cannot characterize their layer numbers, they have to keep transmitting the message $\chi_j$ up to $\bar{K}_j$ steps. 
\end{remark}

Finally, we propose the overall constraint on the network topology which makes sure that the network achieves resilient omniscience despite of smart spoofing actions.

\begin{theorem} \rm \label{th.over}
Resilient omniscience of a network with communication delays and asynchrony under $f$-local smart spoofer model is achieved using the proposed estimation scheme if $\mathcal{G}$ is strongly $ \big( 3(\beta+1)f+1 \big)$-robust w.r.t. $\mathcal{S}_j, \ \forall \lambda_j \in \sigma_U (A)$.
\end{theorem}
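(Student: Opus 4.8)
The plan is to assemble the final result by chaining the three structural results already established, since the strong robustness hypothesis is precisely the topological ingredient that each of those results requires. First I would fix an arbitrary undetectable eigenvalue $\lambda_j \in \sigma_U(A)$ and invoke Lemma~\ref{lm.gterm}: because $\mathcal{G}$ is strongly $\big(3(\beta+1)f+1\big)$-robust with respect to $\mathcal{S}_j$, the SR-MEDAG construction phase terminates for $\lambda_j$, i.e. every regular node eventually sets its counter $c_i(j)=1$. Proposition~\ref{prop.Kj1} further guarantees that this happens within the finite bound $\bar{K}_j$, so termination is not merely asymptotic but occurs in bounded time, which is what legitimizes running the estimation updates in parallel on an eventually completed graph.

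Next I would feed this termination into Theorem~\ref{th.SRMEDAG}, which states that once the construction phase terminates for $\lambda_j$, the distributively discovered spanning sub-graph $\mathcal{G}^j$ satisfies all the defining properties of an SR-MEDAG: acyclicity, the in-degree bound $\vert \mathcal{N}_i^j \vert \geq 2(\beta+1)f+1$ for every follower node, and the layered partition of $\mathcal{R}$ into $\mathcal{L}_0^j, \ldots, \mathcal{L}_{\xi_j}^j$. Since $\lambda_j$ was arbitrary, repeating this argument for every $\lambda_j \in \sigma_U(A)$ shows that the network $\mathcal{G}$ contains a valid SR-MEDAG for each undetectable eigenvalue simultaneously, which is exactly the standing hypothesis of Theorem~\ref{th.est}.

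With that hypothesis in hand, the conclusion follows directly from Theorem~\ref{th.est}: a network containing an SR-MEDAG for each $\lambda_j \in \sigma_U(A)$ achieves resilient omniscience under the $f$-local smart spoofer model, despite communication delays and asynchrony. To close, I would recall the coordinate decomposition~(\ref{eq.map}) and combine the two halves of each regular node's estimate: the sub-state associated with $\mathcal{D}_i$ converges through the local Luenberger observers~(\ref{eq.lu}), while the sub-state associated with $\mathcal{U}_i$ converges through the filtered consensus rule~(\ref{eq.consda}) running over the SR-MEDAGs, and then undo the transformation $x[k]=\Psi z[k]$ to recover convergence in the original coordinates for all $i \in \mathcal{R}$.

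I do not expect a genuine obstacle at this final step, because the difficult work is already contained in the earlier results: the motif-counting of Lemma~\ref{lm.motif}, the sandwiching argument of Lemma~\ref{lm.sandwich}, and the robustness-to-termination argument of Lemma~\ref{lm.gterm}. The only point requiring care is to verify that the single robustness hypothesis genuinely supplies every premise used downstream, namely that it is imposed for \emph{all} $\lambda_j \in \sigma_U(A)$ so that the SR-MEDAGs coexist, and that the bounded-time termination of Proposition~\ref{prop.Kj1} truly permits the parallel estimation phase to converge on the completed sub-graphs rather than on partially constructed ones.
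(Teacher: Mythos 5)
Your proposal is correct and follows exactly the same route as the paper's proof: Lemma~\ref{lm.gterm} gives termination of the SR-MEDAG construction phase for each $\lambda_j$, Theorem~\ref{th.SRMEDAG} then yields a valid SR-MEDAG for every $\lambda_j \in \sigma_U(A)$, and Theorem~\ref{th.est} concludes resilient omniscience. The extra remarks about Proposition~\ref{prop.Kj1} and the coordinate transformation are harmless elaborations of material already contained in those cited results.
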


\begin{proof}
According to Lemma~\ref{lm.gterm}, the SR-MEDAG construction phase terminates for every undetectable eigenvalue $\lambda_j$ if $\mathcal{G}$ is strongly $ \big( 3(\beta+1)f+1 \big)$-robust w.r.t. $\mathcal{S}_j, \ \forall \lambda_j \in \sigma_U (A)$. Thus, based on Theorem~\ref{th.SRMEDAG}, as SR-MEDAG exists for every $\lambda_j \in \sigma_U(A)$. Finally, from Theorem~\ref{th.est}, the existence of an SR-MEDAG for every $\lambda_j \in \sigma_U(A)$ leads to resilient omniscience by using our proposed distributed estimation scheme in a network with communication delays and asynchrony under $f$-local smart spoofer model.
\end{proof}

\begin{remark} \rm
Suppose that smart spoofers impersonate none of the regular nodes during SR-MEDAG construction phase. Then, the sufficient constraint on the network topology to achieve resilient omniscience is strongly $ \big( 2(\beta+1)f+1 \big)$-robust w.r.t. $\mathcal{S}_j, \ \forall \lambda_j \in \sigma_U (A)$.
\end{remark}

Note that the presented sufficient conditions on the topology will be the same as the case of Byzantine attacks, proposed in \cite{mitra2018resilient}, if we set the parameter $\beta=0$. It means that the estimation will converge for all regular nodes under a simpler topology, i.e. strongly ($3f+1$)-robust w.r.t. $\mathcal{S}_j, \ \forall \lambda_j \in \sigma_U (A)$. This is consistent with the most important massage of our paper which asserts that asynchronous networks are more susceptible against cyber attacks; asynchronous networks can be threaten by adversaries that are stronger than Byzantine nodes, i.e. smart spoofers, which can use free time-steps between updates of regular nodes to impersonate some of them in order to mislead the others.

\subsection{Time-Varying Networks}
In the presented results so far, the observers over network $\mathcal{G}$  were supposed to be fixed, that is, the edge set $\mathcal{E}$ was time invariant. We now reconsider the results with a partially asynchronous time-varying network $\mathcal{G}[k] = (\mathcal{V}, \mathcal{E}[k])$ instead of the original time-invariant graph $\mathcal{G}$ earlier. To this end, similar to what is presented in \cite{dibaji2017automatica}, we define a jointly graph robustness measure as follows.

\begin{definition} \rm \label{df.joint}
(Jointly strongly $r$-robust w.r.t. $\mathcal{S}$) The time-varying graph $\mathcal{G}[k] = (\mathcal{V}, \mathcal{E}[k])$ is said to be jointly strongly $r$-robust w.r.t. $\mathcal{S}$ if there exists a fixed $\bar{\mu} \geq 0$ such that $\bigcup_{\mu=0}^{\bar{\mu}} \mathcal{G}[k-\mu]$, $k \in \mathbb{Z}_{\geq \bar{\mu}}$, is strongly $r$-robust w.r.t. $\mathcal{S}$.
\end{definition}

Referring to Lemma~\ref{lm.kscap}, the capacity of smart spoofers for impersonating regular nodes is bounded within each consecutive $\bar{k}$ steps by $\beta$. Thus, the horizon parameter $\bar{\mu}$ of time-varying graph $\mathcal{G}[k]$ has to satisfy the following inequality:
\begin{equation} \label{eq.joint}
\bar{\mu} \leq \bar{k}.
\end{equation}
Note that, otherwise, each smart spoofer would have extra capacity to impersonate more than $\beta$ regular nodes after each $\bar{k}$ steps. 

Now, the following result states the extension of our main result (Theorem~\ref{th.over}) for the case of time-varying networks.

\begin{corollary} \rm
Resilient omniscience of a network with communication delays and asynchrony under $f$-local smart spoofer model can be achieved using the proposed estimation scheme if $\mathcal{G}$ is jointly strongly $ \big( 3(\beta+1)f+1 \big)$-robust w.r.t. $\mathcal{S}_j, \ \forall \lambda_j \in \sigma_U (A)$, under condition (\ref{eq.joint}).
\end{corollary}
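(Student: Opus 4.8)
The plan is to follow the same three-step chain that establishes Theorem~\ref{th.over} in the time-invariant case, replacing only the ingredient that genuinely depends on the edge set being constant. Concretely, Theorem~\ref{th.over} is obtained by composing (i) Lemma~\ref{lm.gterm}, which guarantees termination of the SR-MEDAG construction phase under a strong-robustness hypothesis; (ii) Theorem~\ref{th.SRMEDAG}, which turns termination into the existence of an SR-MEDAG; and (iii) Theorem~\ref{th.est}, which converts that existence into resilient omniscience. Steps (ii) and (iii) never invoke time-invariance of $\mathcal{E}$ -- they use only the counter-flip times, the fixed parent sets $\mathcal{N}_i^j$ produced by Algorithm~\ref{alg1}, and the bounded-delay bookkeeping -- so the only statement that must be re-proved is a time-varying analogue of Lemma~\ref{lm.gterm}.

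First I would re-establish termination by contradiction. Suppose that under jointly strong $\big(3(\beta+1)f+1\big)$-robustness w.r.t.\ $\mathcal{S}_j$ there is a nonempty set $\mathcal{C}\subseteq\mathcal{V}\setminus\mathcal{S}_j$ of regular nodes whose counters $c_i(j)$ never flip from $0$ to $1$. By Definition~\ref{df.joint}, the aggregated graph $\bigcup_{\mu=0}^{\bar{\mu}}\mathcal{G}[k-\mu]$ is strongly $\big(3(\beta+1)f+1\big)$-robust w.r.t.\ $\mathcal{S}_j$, hence $\mathcal{C}$ is $\big(3(\beta+1)f+1\big)$-reachable in that union: some node $i\in\mathcal{C}$ collects incoming edges from at least $3(\beta+1)f+1$ distinct nodes outside $\mathcal{C}$ accumulated over the window of length $\bar{\mu}+1$. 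Of these, the $f$-local bound removes at most $f$ genuine smart spoofers, and Lemma~\ref{lm.kscap} together with condition (\ref{eq.joint}) caps the number of \emph{impersonated} regular neighbours seen over the window at $\beta f$; this is exactly where $\bar{\mu}\leq\bar{k}$ enters, since a window longer than $\bar{k}$ would grant each spoofer capacity exceeding $\beta$ and break the count. Thus at least $2(\beta+1)f+1$ of the out-of-$\mathcal{C}$ neighbours are un-impersonated regular nodes that genuinely transmit $\chi_j$ to $i$ within the window, forcing $i$ to flip its counter by the rule of Algorithm~\ref{alg1} -- contradicting $i\in\mathcal{C}$.

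With termination in hand, Theorem~\ref{th.SRMEDAG} yields an SR-MEDAG $\mathcal{G}^j$ for every $\lambda_j\in\sigma_U(A)$, and Theorem~\ref{th.est} then delivers resilient omniscience. It remains only to check that the estimation phase is insensitive to time variation: once $\mathcal{N}_i^j$ is fixed, the motif and sandwich arguments (Lemmas~\ref{lm.motif}, \ref{lm.sandwich}, \ref{lm.netest}) require solely that each regular parent keeps broadcasting and that the \emph{effective} reception delay $\widetilde{k}_{i\ell}[k]+\tau_{i\ell}[k]$ stays bounded, so that $k_{i\ell}\to k$ as $k\to\infty$. A finite horizon $\bar{\mu}$ guarantees that each parent edge reappears within $\bar{\mu}+1$ steps, keeping this effective delay bounded, so the sandwiching of every adversarial estimate between two convergent genuine parents persists and the errors still vanish asymptotically. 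I expect the main obstacle to be the sliding-window capacity bookkeeping in the termination step: one must argue carefully that spreading both the $\chi_j$-transmissions and the spoofing actions across $\bar{\mu}+1\leq\bar{k}+1$ consecutive graphs cannot let a spoofer impersonate more than $\beta f$ parents of $i$ over that window, which is precisely the content of condition (\ref{eq.joint}).
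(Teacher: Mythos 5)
Your proposal is correct and follows essentially the same route as the paper, which simply asserts that the corollary "follows from Lemmas~3 to 5 since in the proofs there, the time-invariant nature of the original graph $\mathcal{G}$ is not used." You merely spell out in detail what the paper leaves implicit -- re-running the reachability/termination argument on the union graph and noting that condition (\ref{eq.joint}) is exactly what caps the spoofing capacity over the window -- so your write-up is a faithful (and more careful) elaboration of the paper's one-line justification.
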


Similarly, reconsidering the case where smart spoofers do not impersonate any of the regular nodes during the SR-MEDAG construction phase, the following result holds for time-varying networks.

\begin{corollary} \rm
Suppose that smart spoofers impersonate none of the regular nodes during SR-MEDAG construction phase. Then, the sufficient topology constraint on the network $\mathcal{G}[k]$ to achieve resilient omniscience is jointly strongly $ \big( 2(\beta+1)f+1 \big)$-robust w.r.t. $\mathcal{S}_j, \ \forall \lambda_j \in \sigma_U (A)$, with condition (\ref{eq.joint}).
\end{corollary}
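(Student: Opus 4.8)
The plan is to obtain this statement as the time-varying counterpart of the static-network remark above (which asserts that, absent impersonation during construction, strong $(2(\beta+1)f+1)$-robustness w.r.t. $\mathcal{S}_j$ already guarantees resilient omniscience), following exactly the route by which the preceding corollary lifts Theorem~\ref{th.over} to time-varying graphs. First I would unfold Definition~\ref{df.joint}: jointly strong $(2(\beta+1)f+1)$-robustness w.r.t. $\mathcal{S}_j$ supplies a fixed horizon $\bar{\mu}$ such that every aggregate graph $\bigcup_{\mu=0}^{\bar{\mu}}\mathcal{G}[k-\mu]$ is strongly $(2(\beta+1)f+1)$-robust w.r.t. $\mathcal{S}_j$. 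The idea is to treat this union graph, over each sliding window of length $\bar{\mu}+1$, as the effective communication graph and to run the static argument on it.

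Next I would argue that the SR-MEDAG construction phase terminates on this aggregate graph. The enabling feature is that, by Algorithm~\ref{alg1}, every regular node with $c_i(j)=1$ keeps broadcasting $\chi_j$; hence any edge of $\mathcal{G}[k-\mu]$ present at some step of the window delivers $\chi_j$ to its head within that window, so over each window a follower effectively sees all its neighbours in the union graph. Running the termination-by-contradiction of Lemma~\ref{lm.gterm} on this union graph, but with the reduced robustness $2(\beta+1)f+1$ justified by the no-impersonation-during-construction hypothesis exactly as in the static remark, shows that no non-empty $\mathcal{C}\subseteq\mathcal{V}\setminus\mathcal{S}_j$ can persist without update. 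Here condition (\ref{eq.joint}), $\bar{\mu}\le\bar{k}$, is essential: by Lemma~\ref{lm.kscap} a spoofer's impersonation budget is $\beta$ per $\bar{k}$ consecutive steps, so capping the aggregation window at $\bar{k}$ steps prevents a spoofer from accumulating more than $\beta$ impersonations across the window and keeps the counting identical to the time-invariant case.

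Once the construction phase terminates for every $\lambda_j\in\sigma_U(A)$, Theorem~\ref{th.SRMEDAG} yields an SR-MEDAG $\mathcal{G}^j$ for each such eigenvalue, and Theorem~\ref{th.est} then delivers resilient omniscience. I would stress that the estimation stage is essentially insensitive to time-variation of the edge set, since the sandwich argument of Lemma~\ref{lm.sandwich} already tolerates arbitrary bounded delays $\tau_{i\ell}[k]$ and update lags $\widetilde{k}_{i\ell}[k]$, and each regular parent's persistent broadcasting guarantees that its most recent estimate is always available in the memory used by the update rule (\ref{eq.consda}).

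I expect the main obstacle to be the construction-phase termination, specifically the bookkeeping that lets one count ``distinct neighbours that have transmitted $\chi_j$'' over the union graph rather than over a single snapshot, while simultaneously bounding the spoofers' impersonation power. The delicate point is to confirm that condition (\ref{eq.joint}) is not merely sufficient but is exactly what aligns the per-window aggregation horizon with the per-$\bar{k}$-step spoofing budget of Lemma~\ref{lm.kscap}: any slack $\bar{\mu}>\bar{k}$ would let a spoofer impersonate more than $\beta$ regular nodes within the effective graph, invalidating the $(2(\beta+1)f+1)$-robust count. The SR-MEDAG-existence and estimation-phase parts should then follow essentially verbatim from the time-invariant development.
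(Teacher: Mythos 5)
Your proposal is correct and follows essentially the same route as the paper, whose entire proof of this corollary is the single observation that the static arguments (Lemmas~\ref{lm.sandwich}--\ref{lm.gterm}, combined with the no-impersonation remark and condition (\ref{eq.joint})) never use time-invariance of $\mathcal{G}$ and hence carry over to the union graphs supplied by Definition~\ref{df.joint}. You simply make that transfer explicit, which is a faithful (and more detailed) rendering of the paper's reasoning.
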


These results follow Lemmas~\ref{lm.sandwich} to \ref{lm.gterm}  as the time-invariant nature of the original graph $\mathcal{G}$ is not used in the proofs.\\

\subsection{Randomized Update Rule}
Consider the case that each regular node, at each time instant, randomly decides whether to update its state estimate or not. That is, the follower node $i \in \mathcal{R}$ updates its state estimate at each time instant $k$ with the probability of $P_i[k]$. Note that with such updates, the algorithm remains fully distributed. Even the probabilities $P_i[k]$ need not be identical. Intuitively, this is in alignment with Assumption~\ref{asum3} as the regular node will update at least once within each consecutive $\bar{k}$ steps. With this strategy, the topology constraint required for resilient omniscience can be relaxed. This is because the smart spoofers cannot predict the update times in advance and need to use more of their spoofing capacity to make sure that the regular nodes, at each time step, receive and accept false data with fake identities; they cannot impersonate other nodes in a systematic manner in each consecutive $\bar{k}$ steps. In fact, regular nodes utilized randomization in update times as a defensive means against smart spoofers.

What follows is the modification of Theorem~\ref{th.over} for the suggested network with randomized updating strategy.

\begin{figure}[t]
	\def \svgscale{.6}
	\hspace{.7cm}
	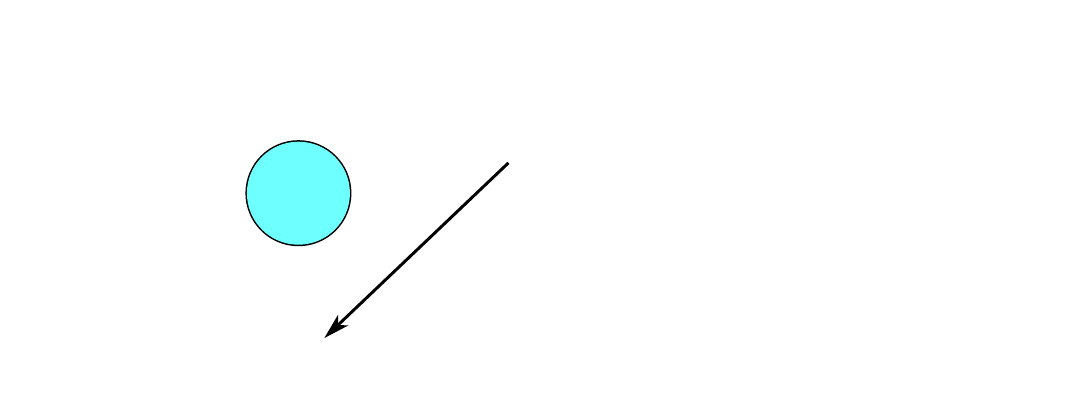
    \caption{A sample network that is $ \big( 2(\beta +1)f+1  \big)$-robust w.r.t. $\lambda_j$. Node $h \in \mathcal{R}_1$ is impersonated by the smart spoofer $s$ for the nodes in set $\mathcal{R}_3$.} \label{fig.sim} 
\end{figure} 

\begin{theorem} \rm \label{th.rand}
Resilient omniscience of a network with communication delays and asynchrony under the $f$-local smart spoofer model can be achieved using the proposed estimation scheme if each follower node $i \in \mathcal{R}$, at each time instant $k$, updates using rule (\ref{eq.consda}) with the probability of $P_i \in (0,1]$ and if $\mathcal{G}$ is strongly $ \big( 3(\beta^\prime +1)f+1 \big)$-robust w.r.t. $\mathcal{S}_j, \ \forall \lambda_j \in \sigma_U (A)$, where $\beta^\prime = \lfloor \beta / \bar{k} \rfloor +1$.
\end{theorem}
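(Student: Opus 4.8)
The plan is to reduce this statement to the already-established analysis behind Theorem~\ref{th.over} by proving a randomized analogue of Lemma~\ref{lm.kscap}: namely, that under the randomized update rule the number of a regular node's neighbours that a single smart spoofer can impersonate \emph{in a manner guaranteed to corrupt that node's update} within each consecutive $\bar{k}$ steps drops from $\beta$ to $\beta^\prime=\lfloor\beta/\bar{k}\rfloor+1$. Once this capacity bound is in place, every later result is inherited with the substitution $\beta\mapsto\beta^\prime$, since none of their proofs use anything about the spoofer beyond the per-window impersonation count: Lemma~\ref{lm.motif} still produces $(\beta^\prime+1)f$ motifs with independent parents, Lemmas~\ref{lm.sandwich} and~\ref{lm.netest} still sandwich the impersonated values, Lemma~\ref{lm.gterm} still guarantees termination of the SR-MEDAG construction under strong $\big(3(\beta^\prime+1)f+1\big)$-robustness, and Theorem~\ref{th.est} then yields resilient omniscience.

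The heart of the argument is the new capacity bound, which rests on Proposition~\ref{prop.nc}. To impersonate a neighbour $\ell$ at target $i$, the spoofer must arrange that the \emph{last} packet $i$ holds under $\ell$'s identity at the instant $i$ updates is the fake one. In the deterministic schedule the spoofer knew $i$'s update step and could land a single fake packet just before it, so one message slot per window sufficed per impersonated neighbour, giving $\beta$. Under randomized updating the step at which $i$ reads its buffers is unpredictable, so to be certain the fake value is the most recent one from $\ell$ regardless of when $i$ updates -- and in the face of $\ell$'s own genuine transmissions, which occur at least once per window, possibly with adversarial delays -- the spoofer must keep a fresh fake $\ell$-packet pending at essentially every one of the $\bar{k}$ candidate update steps. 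Thus persistently impersonating a single neighbour consumes $\bar{k}$ of the spoofer's per-window message slots instead of one. Dividing the total per-window budget of $\beta+1=\alpha\bar{k}$ slots by $\bar{k}$, and accounting for the slot the spoofer must still spend broadcasting under its own identity (Assumption~\ref{asum3}) together with boundary bookkeeping across consecutive windows, leaves at most $\beta^\prime=\lfloor\beta/\bar{k}\rfloor+1$ neighbours impersonable throughout a window. Consequently, at every single update step a regular node sees at most $\beta^\prime f$ of its accepted neighbour values carrying fake data, which is exactly the count the motif/sandwich machinery is built to absorb.

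With the capacity bound $\beta^\prime$ established, I would assemble the conclusion exactly as in the proof of Theorem~\ref{th.over}: strong $\big(3(\beta^\prime+1)f+1\big)$-robustness w.r.t.\ $\mathcal{S}_j$ for every $\lambda_j\in\sigma_U(A)$ forces, by Lemma~\ref{lm.gterm}, termination of the SR-MEDAG construction phase for each undetectable eigenvalue; Theorem~\ref{th.SRMEDAG} then supplies an SR-MEDAG $\mathcal{G}^j$ for each such $\lambda_j$; and Theorem~\ref{th.est}, whose proof only invokes the existence of these SR-MEDAGs and the sandwich estimate -- both now valid with $\beta^\prime$ in place of $\beta$ -- delivers $\lim_{k\to\infty}\vert\hat{x}_i[k]-x[k]\vert=0$ for every regular node under communication delays, asynchrony and the $f$-local smart spoofer model.

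I expect the main obstacle to be formalizing the adversary's information model so that randomization genuinely helps. Definition~\ref{df.spf} grants the spoofer ``complete knowledge \dots at all time steps,'' and if this were read to include the outcomes of the nodes' future update coin flips, the spoofer could again time a single fake packet per neighbour and no relaxation would be possible. The argument therefore requires restricting the adversary to causal (past-and-present) knowledge of the realized update times, and then proving -- in the worst case over delays and over the spoofer's adaptive scheduling -- that no strategy can guarantee more than $\beta^\prime$ persistent impersonations per window. A secondary subtlety is the convergence mode: for the neighbours beyond the guaranteed-safe ones the spoof succeeds only intermittently, so one must confirm that at each individual update step the number of simultaneously-faked neighbour values never exceeds $\beta^\prime f$, which is what keeps the per-step sandwich of Lemma~\ref{lm.sandwich} intact and makes the convergence deterministic rather than merely almost sure.
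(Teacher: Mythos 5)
Your proposal follows essentially the same route as the paper's own proof: you derive the reduced effective impersonation capacity $\beta^\prime = \lfloor \beta/\bar{k}\rfloor + 1$ from the spoofer's inability to predict randomized update times (forcing it to spend $\bar{k}$ message slots per persistently impersonated neighbour), and then rerun the chain Lemma~\ref{lm.gterm} $\rightarrow$ Theorem~\ref{th.SRMEDAG} $\rightarrow$ Theorem~\ref{th.est} with $\beta$ replaced by $\beta^\prime$, exactly as the paper does by invoking the proof of Theorem~\ref{th.over}. Your added remarks on the adversary's causal information model and on the intermittently spoofed extra node are points the paper's proof leaves implicit, but they do not change the argument.
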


\begin{proof}
Referring to Lemma~\ref{lm.kscap}, each smart spoofer was able to impersonate at most $\beta$ regular nodes within $\bar{k}$ steps in case the smart spoofers knew when exactly each regular node updates its state estimate. Now, consider that each regular node $i$ make an update at each time instant $k$ randomly with a probability of $P_i[k] \in (0,1]$. Then, each smart spoofer $s \in \mathcal{N}_i$ cannot predict when exactly the node $i$ updates; so it has to impersonate the incoming neighbors of the node $i$ for all the time steps within each consecutive $\bar{k}$ steps, that is $\bar{k}$ times. As a result, the smart spoofers need to dedicate more capacity to produce faulty data packets with the mimicked identities of the neighbors of the node $i$. Thus, the smart spoofers will be able to impersonate $ \lfloor \beta / \bar{k} \rfloor$ regular nodes for any of $\bar{k}$ steps and one regular node for a limited number of time-steps, i.e. less than $\bar{k}$. In this situation, to ensure that the smart spoofers cannot impersonate any extra regular nodes, we define $\beta^\prime = \lfloor \beta / \bar{k} \rfloor +1$. Accordingly, similar to the proof of Theorem~\ref{th.over}, the required topology constraint for omniscience based on the parameter $\beta^\prime$ is strongly $ \big( 3(\beta^\prime +1)f+1 \big)$-robust w.r.t. $\mathcal{S}_j, \ \forall \lambda_j \in \sigma_U (A)$.
\end{proof}

\begin{remark} \rm
Based on Lemma~\ref{lm.kscap}, we have $\beta^\prime = \lfloor (\alpha \bar{k} -1)/ \bar{k} \rfloor$. On the other hand, we know $\bar{k} \geq 1$. Therefore, it is concluded that $\beta^\prime = \alpha - 1$ in the case of a synchronous network ($\bar{k}=1$) and $\beta^\prime = \alpha $ in an asynchronous network ($\bar{k} > 1$).
\end{remark}

 \section{Simulation Results}\label{Sect: Simulations}

In this section, we present a simulation example to demonstrate how a smart spoofer can misbehave and how it can be restrained in a given network of distributed observers. In particular, we show why the constraints on the network topology, proposed in Theorems~\ref{th.est} and \ref{th.over}, are critical for achieving resilient omniscience under $f$-local smart spoofer model in the presence of asynchronous communications and delays.

\begin{figure}[t] 
	\begin{center}
	\includegraphics[scale=.55]{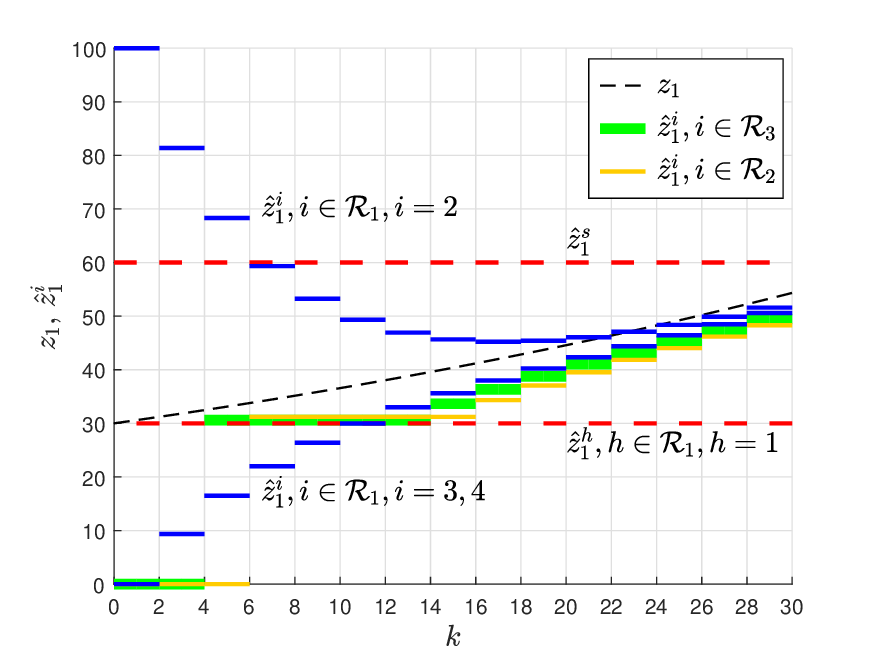}
	\caption{Omniscience of the sample network: the regular source nodes in $\mathcal{R}_1$ asymptotically estimate $z_1$ using Luenberger observers. The follower nodes in $\mathcal{R}_2$ and $\mathcal{R}_3$ can asymptotically estimate $z_1$ despite of the efforts smart spoofer does for misleading them by impersonating the node $h \in \mathcal{R}_1$. They also estimate $z_2$ since there is no spoofing for $\lambda_2$.}
	\label{fig.z1_omni}
	\end{center}
\end{figure}

\begin{figure}[t] 
	\begin{center}
	\includegraphics[scale=.55]{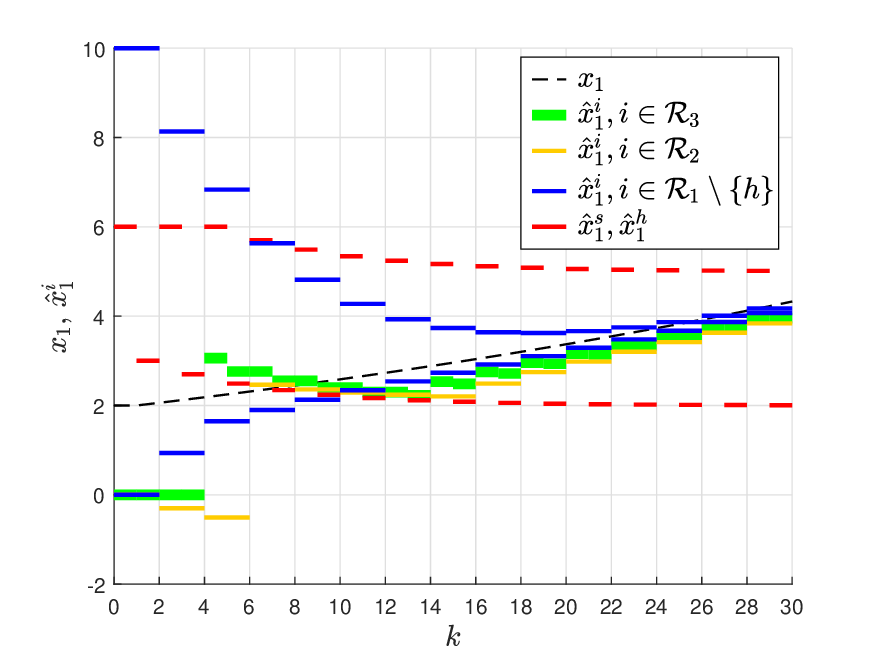}
	\caption{All the regular nodes truly estimate $x_1$ while the smart spoofer $s$ impersonates the node $h \in \mathcal{R}_1$ for the nodes in $\mathcal{R}_3$.}
	\label{fig.x1_omni}
	\end{center}
\end{figure}

\begin{figure}[t] 
	\begin{center}
	\includegraphics[scale=.55]{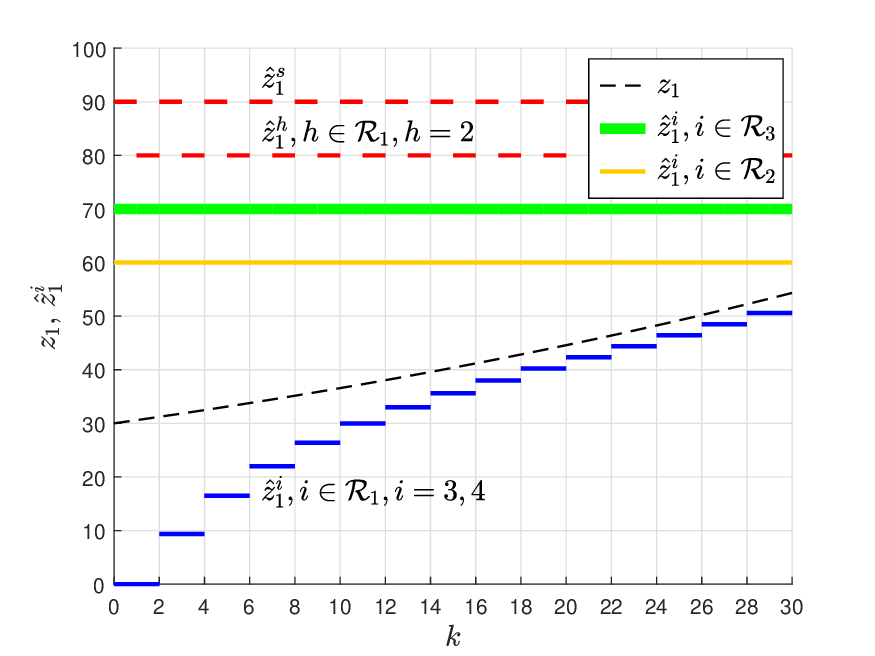}
	\caption{The smart spoofer $s$ prevents the follower nodes for $\lambda_1$ in $\mathcal{R}_2$ and $\mathcal{R}_3$ to estimate $z_1$ by sending false estimate values on behalf of node $h \in \mathcal{R}_1$ to all the nodes in $\mathcal{R}_3$.}
	\label{fig.z1_fail}
	\end{center}
\end{figure}

\begin{figure}[t] 
	\begin{center}
	\includegraphics[scale=.55]{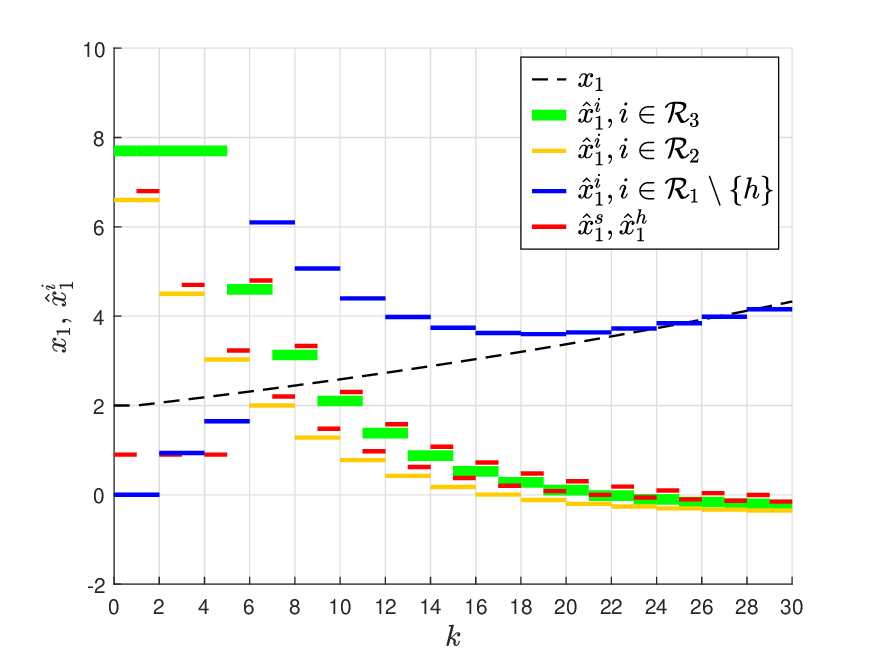}
	\caption{The regular nodes in $\mathcal{R}_2$ and $\mathcal{R}_3$ fail to estimate $x_1$ if the smart spoofer $s$ impersonates the node $h \in \mathcal{R}_1$ for the nodes in $\mathcal{R}_3$.}
	\label{fig.x1_fail}
	\end{center}
\end{figure}

To this end, consider the network illustrated in Fig.~\ref{fig.sim}. The directed edges of the graph represent all to one connections and edges pointing in both directions represent all to all connections. The network has three sets of regular nodes $\mathcal{R}_1$, $\mathcal{R}_2$, $\mathcal{R}_3$, and a smart spoofer $s$ ($f=1$). The capacity of $s$ is assumed to be $\alpha = 1$ and all of the nodes are supposed to make, at least, an update within $\bar{k}=2$ steps. Thus, the parameter is set as $\beta =1$. There are $2(\beta +1)f$ nodes in each of the sets $\mathcal{R}_1$ and $\mathcal{R}_2$ and $2(\beta +1)f+1$ nodes in the set $\mathcal{R}_3$, where the nodes within each set are not connected. Furthermore, communication delays over the network are defined as follows:

\begin{equation*}
\tau_{ij}=
\begin{cases}
3, & \mbox{if} \ (i \in \mathcal{R}_3 \ \mbox{and} \ j \in \mathcal{R}_2)\\
2, & \mbox{if} \ (i \in \mathcal{R}_2 \ \mbox{and} \ j \in \mathcal{R}_3) \\
& \mbox{or} \ (i \in \mathcal{R}_3 \ \mbox{and} \ j \in \mathcal{R}_1)\\
1, & \mbox{if} \ (i \in \mathcal{R}_1 \ \mbox{and} \ j \in \mathcal{R}_3) \\
& \mbox{or} \ (i \in \mathcal{R}_3 \ \mbox{and} \ j=s)\\
0, & \mbox{if} \ i=s \ \mbox{and} \ (j \in \mathcal{R}_1 \ \mbox{or} \ j \in \mathcal{R}_2)
\end{cases}
\end{equation*}

Note that the smart spoofer fully knows the dynamic system and the observation models of the regular nodes and calculates its own and the impersonated states estimations in a way that the targeting regular nodes fail to reach omniscience. Thus, to give a better intuition, we deal with the transformed dynamic system and states in our simulations; the original system can be analyzed accordingly. For sake of simplicity, we use the terms ``send" and ``receive" with $\hat{z}_1^i$ (or $\hat{z}_2^i$), while our purpose is the original state $\hat{x}_1^i$ associated to the $\hat{z}_1^i$.

The transition matrix of the original dynamic system and the initial state are assumed to be
\begin{equation*}
A=\left[
\begin{matrix}
    0.98  &  0.02\\
   -0.04  &  1.04
\end{matrix}
\right], \
x_0=\left[
\begin{matrix}
    2\\
    5
\end{matrix}
\right],
\end{equation*}
which, according to (\ref{eq.map}), are transformed by
\begin{equation*}
\Psi=\left[
\begin{matrix}
0.1 & 1\\
0.2 & 1
\end{matrix}
\right]
\end{equation*}
into the following diagonal system and initial state:
\begin{equation*}
\bar{A}=\left[
\begin{matrix}
1.02 & 0\\
0    & 1
\end{matrix}
\right], \
z_0=\left[
\begin{matrix}
    30\\
    -1
\end{matrix}
\right].
\end{equation*}
The first eigenvalue of the system ($\lambda_1=1.02$) is unstable and the second one ($\lambda_2=1$) is marginally stable. The observation model of the network system is assumed to be $C_i = [-10 \ 10]$, $ \forall i \in \mathcal{R}_1$, $C_i = [2 \ -1]$, $\forall i \in \mathcal{R}_2$ and $C_i = 0$, $\forall i \in \mathcal{R}_3$. The transformation of the observation model is given by $\Psi$ as $\bar{C}_i = [1 \ 0]$, $\forall i \in \mathcal{R}_1$, $\bar{C}_i = [0 \ 1]$, $\forall i \in \mathcal{R}_2$ and $\bar{C}_i = 0$, $\forall i \in \mathcal{R}_3$. This means that the nodes in $\mathcal{R}_1$ are source nodes for $\lambda_1$ and followers for $\lambda_2$ as they can only detect $\lambda_1$, the nodes in $\mathcal{R}_2$ are source nodes for $\lambda_2$ and followers for $\lambda_1$, and the nodes in $\mathcal{R}_3$ are followers for both $\lambda_1$ and $\lambda_2$. Also, the nodes in the set $\mathcal{R}_3$ and the node $s$ update at all time instants and the nodes in $\mathcal{R}_1$ and $\mathcal{R}_2$ update at time instants $k=m\bar{k}, \ m \in \mathbb{Z}_+$.

We present the simulation results in two test scenarios. In both scenarios, the smart spoofer just impersonate only one node in $\mathcal{R}_1$ and only for the state $z_1$. Thus, all the nodes of the network will accurately estimate the state $z_2$. In scenario 1, we show that the follower nodes for $\lambda_1$ can asymptotically estimate $z_1$ even though the smart spoofer $s$ tries to mislead the follower nodes but the network finally achieves omniscience. However, in scenario 2, the follower nodes cannot reach a true estimate of $z_1$ as the required topology constraint for estimation (Theorem~\ref{th.est}) is not satisfied. 

\begin{scenario} \rm The smart spoofer sends the message $\chi_s=1$ to all the nodes in $\mathcal{R}_3$ to pretend that it is a parent node for $\lambda_1$. Although the smart spoofer can impersonate a regular node during the SR-MEDAG construction phase, it decides not to do so and goes through the estimation phase. The initial estimates of $z_1$ for nodes in $\mathcal{R}_1$ are $\hat{z}_1^i[0]=100$, $i=1,2$ and $\hat{z}_1^i[0]=0$, $i=3,4$. Also, we have $\hat{z}_1^i[0] = 0$, $i \in \mathcal{R}_2$, and $\hat{z}_1^i[0] = 0, i \in \mathcal{R}_3$. Moreover, each regular node $i \in \mathcal{R}_1$ uses a Luenberger observer with the gain $L_i^1 = 0.5$ to estimate $z_1$. Starting the estimation phase, the smart spoofer $s$ sends two sequences of estimate values to all the nodes in $\mathcal{R}_3$ in a way that the receiving estimate values from the nodes in $\mathcal{R}_1$ are eliminated in local filtering: i) the estimate value $\hat{z}_1^s[k]=60$, where $k = m\bar{k}-1, m \in \mathbb{Z}_+$, which keeps the smart spoofer $s$ among the accepted neighbors of the nodes in $\mathcal{R}_3$ as each node has to send a data packet at least in each consecutive $\bar{k}$ steps, ii) a false estimate value $\hat{z}_1^h[k]=30$, where $k = m\bar{k}, m \in \mathbb{Z}_{\geq 0}$, on behalf of the node $h=1, h \in \mathcal{R}_1$ to the nodes in $\mathcal{R}_3$. As shown in Fig.~\ref{fig.z1_omni}, all the regular nodes can estimate $z_1$ although the smart spoofer caused a deviation in the estimations of the nodes in $\mathcal{R}_3$ (and accordingly the nodes in $\mathcal{R}_2$) up to time instant $k=12$. Note that the estimate value of $\hat{z}_1^h[k]$ will not be filtered only if it converges to $z_1$. In fact, referring to Lemma~\ref{lm.sandwich}, estimate values of the smart spoofer are sandwiched by estimate values of the regular parent nodes, thanks to our proposed rule (\ref{eq.consda}) and the network topology constraint discussed in Theorem~\ref{th.est}.
\end{scenario}

\begin{scenario} \rm Here, the smart spoofer $s$ sends a message $\chi_s=1$ to the nodes in $\mathcal{R}_3$ while it impersonates the node $p=1, p\in \mathcal{R}_1$ in the SR-MEDAG construction phase by setting the message $\chi_p = 0$, i.e. the node $p$ cannot be a parent node of the nodes in $\mathcal{R}_3$ for $\lambda_1$. In other words, Algorithm~\ref{alg1} does not terminate in the case of $\lambda_1$ for the nodes in $\mathcal{R}_3$. In fact, the constraint on the network topology is not satisfied for the estimation of $z_1$ since the nodes in $\mathcal{R}_3$ recognizes only $2(\beta +1)f$ parent nodes for $\lambda_1$. However, assume that the nodes in $\mathcal{R}_3$ decide to start the estimation regardless of the termination of the SR-MEDAG construction phase. As a result, the smart spoofer is able to impersonate one more regular node of the set $\mathcal{R}_1$ this time in the estimation phase. The initial estimates of $z_1$ are given as $\hat{z}_1^i[0]=10, i \in \mathcal{R}_1, i=2$, $\hat{z}_1^i[0]=0$, $i \in \mathcal{R}_1, i=3,4$, $\hat{z}_1^i[0] = 6$, $i \in \mathcal{R}_2$, and $\hat{z}_1^i[0] = 7, i \in \mathcal{R}_3$. Again, the smart spoofer $s$ sends two sequences of estimate values to the nodes in $\mathcal{R}_3$ in a way that the estimate values of the nodes $i \in \mathcal{R}_1$, $i=3,4$, are eliminated in local filtering: i) the estimate value $\hat{z}_1^s[k]=8$, where $k = m\bar{k}-1, m \in \mathbb{Z}_+$, which keeps the smart spoofer $s$ among the accepted neighbors of the nodes in $\mathcal{R}_3$, ii) a false estimate value $\hat{z}_1^h[k]=9$, where $k = m\bar{k}, m \in \mathbb{Z}_{\geq 0}$, on behalf of the node $h=1, h \in \mathcal{R}_1$, to the nodes in $\mathcal{R}_3$. Fig.~\ref{fig.z1_fail} shows the consequence of spoofing in the estimations. The initial estimate values of the nodes in $\mathcal{R}_2$ and $\mathcal{R}_3$ remain constant for all the future time. It is noteworthy that, not only the nodes in $\mathcal{R}_3$ are affected by the spoofing, but the nodes in $\mathcal{R}_2$ are also affected indirectly and none of them can reach omniscience for $z_1$.  
\end{scenario}

While we primarily analyzed the success or failure of the network omniscience in the transformed dynamic system, the main results are valid for the original dynamic system with different time histories of state values (Fig.~\ref{fig.x1_omni} and Fig.~\ref{fig.x1_fail}).

\section{Conclusion}\label{Sect: Conclusion}
Combining Byzantine adversarial model and spoofing as a misbehaving technique, we introduced a new type of cyber attack called smart spoofing. Then, we investigated the problem of distributed observer design for LTI systems in the presence of this attack which uses the asynchrony in communications to threaten the network. Using a two-step distributed mechanism, including a pre-executing algorithm for recognizing the trusted neighbors and a local-filtering algorithm for removing possible incorrect values induced by the adversarial nodes, the regular nodes can achieve resilient observation over so-called strongly robust graphs. We proposed resilient topology constraints on static and time-varying networks under the proposed adversarial threat. Numerical simulations with a sample network validate our analytic results. The proposed designs are applicable to a vast range of networked systems. In future studies, we consider resilient consensus problems prone to the smart spoofing attacks.

\section*{Acknowledgements}

I offer my sincerest gratitude to Dr. Seyed Mehran Dibaji and Prof. Hideaki Ishii for the time they dedicated to me for useful discussions on the topic as well as their technical comments which significantly helped me to improve the quality of this paper.

\bibliography{filterSpoofing}


\end{document}